\newtheorem{lemma}{\sc Lemma}
\newtheorem{theorem}[lemma]{\sc Theorem}
\newtheorem{proposition}[lemma]{\sc Proposition}
\newtheorem{corollary}[lemma]{\sc Corollary}
\newtheorem{remark}{\sc Remark}
\newtheorem{assumption}{\sc Assumption}
\newtheorem{definition}{\sc Definition}
\newcommand{\citep}{\cite }
\DeclareMathOperator*{\argmin}{arg\,min}
\DeclareMathOperator*{\argmax}{arg\,max}
\newcommand{\N}{\llbracket N\rrbracket}
\newcommand{\M}{\llbracket M\rrbracket}
\newcommand{\R}{\llbracket R\rrbracket}
\newenvironment{proof}{{\noindent \bf Proof:\ }}{ \hfill $\square$}
\begin{document}

\date{}

\title{A Study of Truck Platooning Incentives \\ Using a Congestion Game\thanks{ An early version of this paper on motivating the modeling assumptions and to extract appropriate simulation parameters using real traffic data was presented at the 16th International IEEE Conference on Intelligent Transportation Systems (ITSC~2013)~\cite{FarokhiITSC2013}. A preliminary version of the theoretical results is submitted for presentation~\cite{FarokhiSubmitted2013}. The work was supported by the Swedish Research Council, the Knut and Alice Wallenberg Foundation, and the iQFleet project.}}
\author{Farhad~Farokhi$^\dag$~and~Karl~H.~Johansson\thanks{The authors are with ACCESS Linnaeus Center, School of Electrical Engineering, KTH Royal Institute of Technology, SE-100 44 Stockholm, Sweden. E-mails: \{farakhi,kallej\}@kth.se }}

\maketitle

\begin{abstract} We introduce an atomic congestion game with two types of agents, cars and trucks, to model the traffic flow on a road over various time intervals of the day. Cars maximize their utility by finding a trade-off between the time they choose to use the road, the average velocity of the flow at that time, and the dynamic congestion tax that they pay for using the road. In addition to these terms, the trucks have an incentive for using the road at the same time as their peers because they have platooning capabilities, which allow them to save fuel. The dynamics and equilibria of this game-theoretic model for the interaction between car traffic and truck platooning incentives are investigated. We use traffic data from Stockholm to validate parts of the modeling assumptions and extract reasonable parameters for the simulations. We use joint strategy fictitious play and average strategy fictitious play to learn a pure strategy Nash equilibrium of this game. We perform a comprehensive simulation study to understand the influence of various factors, such as the drivers' value of time and the percentage of the trucks that are equipped with platooning devices, on the properties of the Nash equilibrium. \\
\par
\textbf{Keywords:} Heavy-Duty Vehicle Platooning, Atomic Congestion Game, Pure Strategy Nash Equilibrium, Learning Algorithm.
\end{abstract}

\section{Introduction}
\subsection{Motivation}
Urban traffic congestion creates many problems, such as increased transportation delays and fuel consumption, air pollution, and dampened economic growth in heavily congested areas~\cite{barth2008real,Hymel2009127,fuglestvedt2008climate}. A recent study~\citep{fuglestvedt2008climate} shows that the transportation has contributed to approximately 15\% of the total man-made carbon-dioxide since preindustrial era and suggests that it will be responsible for roughly 16\% of the carbon-emission over the next century. To circumvent part of these issues, the local governments in some urban areas introduced congestion taxes to manage the traffic congestion over existing infrastructures. For instance, Stockholm implemented a congestion taxing system in August, 2007 after a seven-month trial period in 2006. A survey of the influence of the congestion taxes over the trial period can be found in~\cite{Eliasson2009240}, which  shows significant improvements in travel times as well as favorable economic and environmental effects. Behavioral aspects and other influences of the Stockholm congestion taxing system is discussed in~\cite{Karlstrom2009283,WinslottHiselius2009269,Eliasson2006602,Borjesson20121}.

In parallel to reducing the congestion, we can employ other means to improve the fuel efficiency and decrease the carbon emission~\cite{barth2008real}. One way to improve the fuel efficiency of vehicles is platooning, as vehicles experience a reduced air drag force when they travel in platoons~\cite{AlAlam10,Xiaoliang6426543,zabat1995estimates,BONNETFRITZ,1603555}. Trucks or heavy-duty vehicles can significantly improve their fuel efficiency by platooning with their peers. In~\cite{AlAlam10}, the authors report 4.7\%-7.7\% reduction in the fuel consumption (depending on the distance between the vehicles among other factors) when two identical trucks move close to each other at $70\,\mathrm{km/h}$. In a futuristic scenario when several trucks are equipped with platooning devices, they are able to save fuel by cooperating with each other. However, implementing truck platooning in a large-scale setup is not easy since a global decision-maker might become complex and the vehicles can belong to competing entities. In addition, it is interesting to study if a desirable behavior can emerge from simple local strategies. In this paper, we consider such a case where the traffic flow can be modeled as a congestion game and the desired behavior corresponds to an equilibrium of this game.

\subsection{Related Studies}
Modeling the traffic flow using congestion games or routing games is a well-known problem~\cite{Fisk1984301,Levinson2005691,correa2005inefficiency,rosenthal1973class,
krichenestackelberg,braess2005paradox,awerbuch2005price,wardrop1952road,Yang2007841}. Rosenthal~\cite{rosenthal1973class} presented a noncooperative game in which a finite number of players compete for using a finite set of resources with application to modeling transport networks. He showed that a class of these games admit at least one pure strategy Nash equilibrium (an action profile in which no agent has an incentive to unilaterally deviate from her action). Later, the authors of~\cite{monderer1996potential} showed that atomic congestion games are indeed potential games (i.e., there exists a potential function, such that its variation when only one agent changes her action is equal to the variation of the utility of the corresponding agent) under some conditions and, hence, one can find a Nash equilibrium by minimizing the potential function. For a survey of these and related results, see~\cite{voorneveld1999congestion}. Most of these studies modeled the \emph{route selection} using an atomic congestion game. Recently, the authors of~\cite{gameroad} utilized a congestion game for modeling instead the time interval in which drivers decide to use a road.

This setup may be extended to weighted congestion games in which every agent is associated with a (splittable or unsplittable) demand (not equal and more than a single unit) that should be routed over the network. In~\cite{NET:NET3230030104}, Rosenthal showed that a Nash equilibrium does not necessarily exist in these games if the agents can split their demand. The authors of~\cite{libman2001atomic,fotakis2005selfish,goemans2005sink} constructed counterexamples to show that a Nash equilibrium does not necessarily exist also for unsplittable demands as well. However, when cost functions (i.e., latencies) of each road are affine functions, an equilibrium certainly exists (and may be found in pseudo-polynomial time)~\cite{fotakis2005selfish}. In~\cite{libman2001atomic}, it was also proved that an equilibrium may exist for a special class of cost functions (that are only a function of the residual capacity on each edge) on parallel networks. The largest class of latency functions for which the game admits an equilibrium were explored in~\cite{harks2012existence}. It was also shown that a weighted congestion game admits an exact potential function (a weighted potential function) if and only if the set of costs contains only affine functions (affine or exponential functions)~\cite{harks2011characterizing}.

The studies discussed above mainly consider homogeneous congestion games in which all the drivers on a road at any given time interval perceive the same cost function (e.g, the drivers only consider the latency in their decision-making and they all have the same sensitivity to the latency as well). However, in road traffic networks, this assumption might not be realistic. For instance, as we will see in this paper, whenever the drivers include the fuel consumption in their decision making, trucks and cars potentially have different cost functions even if they observe the same latency when using the road. To capture this phenomenon, we extend the model in~\cite{gameroad} to an atomic congestion game with two types of agents, namely, cars and trucks. 
Notice that the problem of heterogeneous congestion and routing games have been studied extensively in the past~\cite{milchtaich1996congestion,dafermos1972traffic, netter1971equilibrium}. For instance, in~\cite{milchtaich1996congestion}, the author formulated a congestion game in which each player has a specific cost function that depends on the congestion. In that study, it was shown that every unweighted congestion game with player-specific cost functions admits at least one equilibrium; however, this results may not be generalized to weighted congestion games with player-specific cost functions in general. In addition, generally, even unweighted congestion games with player-specific cost functions do not admit a potential function. For routing games, in which a continuum of players route an infinitesimal amount of flow, it was proved that a potential function exists if a symmetry condition is satisfied for the cost functions (i.e., various classes of agents bother or delight each other equally)~\cite{dafermos1972traffic,engelson2006congestion}. A class of necessary and sufficient conditions for the existence of potential functions was presented in~\cite{FarokhiAllerton2013}. Conditions for the (essential) uniqueness of the equilibrium in multi-class routing games were also presented in~\cite{konishi2004uniqueness,daganzo1983stochastic}. 

Motivated by the fact that the Nash equilibrium is generally inefficient, the price of anarchy (i.e., the worst-case ratio of the social welfare function for a Nash equilibrium over the social welfare function for a socially optimal solution) of atomic congestion games with linear latency functions was studied in~\cite{christodoulou2005price}. Several studies have proposed congestion taxes (also known as tolls) to improve the social cost function when all the agents are equally sensitive to the proposed taxes~\cite{marden2009joint,marden2009payoff,Pigou1,roughgarden2007routing} as well as when they have different sensitivities~\cite{zhang2012enhancing,zhang2008multiclass,zhang2011competitive,yang2004multi}. For instance, in~\cite{yang2004multi}, tolls were introduced to minimize the total travel time and the total travel cost (as a bi-objective optimization problem). This setup was generalized in~\cite{zhang2008multiclass} to also admit  entities that own several agents (and wish to optimize the combined utility of those agents). The idea of maximizing the reserve capacity of the network was approached in~\cite{zhang2012enhancing}. A scenario in which the network is managed by several decision-makers (with conflicting objectives) across various regions was discussed in~\cite{zhang2011competitive}. The authors of~\cite{marden2009joint,marden2009payoff,gameroad} presented congestion taxes so that the underlying congestion game admits the social welfare as a potential function. This is certainly of interest because it guarantees that the socially optimal decision is also a Nash equilibrium. However, in those studies, the authors needed to introduce a congestion tax for all the agents (and not only a subset of them).

\subsection{Contributions}
In this paper, we model the traffic flow at non-overlapping intervals of the day using an atomic\footnote{We use the term atomic to emphasize the fact that we are not dealing with a continuum of players or fractional flows when modeling the traffic flow as a congestion game~\cite{schmeidler1973equilibrium,6426543}.} congestion game with two types of agents. The agents of the first type are cars as well as trucks that do not have platooning equipments. For the sake of brevity, we call all these agents cars. They optimize their utility, which is a sum of the penalty for deviating from their preferred time for using the road, the average velocity of the traffic flow along the road, and the congestion tax that they pay for using the road at that time interval. The  agents of the second type are trucks equipped with platooning devices. For the sake of brevity, we call these agents trucks. In addition to the above mentioned terms, they have an incentive for using the road with other trucks (due to an increased chance for platooning and, hence, reducing their fuel consumption). 

We model the average velocity of the flow at each time interval as an affine function of the number of the vehicles that are using the road at that time interval. We use real traffic data from the northbound E4 highway from Lilla Essingen to the end of Fredh\"{a}llstunneln in Stockholm to validate this modeling assumption. 

We determine a necessary condition for the existence of a potential function for the introduced atomic congestion game with two types of agents and use this condition to prove that in general the congestion game is not a potential game. Therefore, we devise appropriate congestion taxes (specifically, a congestion taxing policy for cars and a platooning subsidy for trucks) to guarantee the existence of a potential function. Based on this result, we prove that the atomic congestion game admits at least one pure strategy Nash equilibrium under the proposed congestion tax--subsidy policy. Equipped with these results, we use joint strategy fictitious play and average strategy fictitious play to learn a Nash equilibrium. Intuitively, we interpret the learning algorithm as the way drivers decide on a daily basis to choose the time interval on which they are using the road by optimizing their utility given the history of their actions. Iterating over days, the drivers' decisions (i.e., the profile of the learning algorithm) converges almost surely to a pure strategy Nash equilibrium. Note that the potential games are certainly not the only classes of games for which variants of the fictitious play (e.g., joint strategy fictitious play) may converge to an equilibrium. To mention a few example, the authors of~\cite{Monderer1996258,monderer1996potential} introduced ordinal potential games and weighted potential games as two families of games for which the fictitious play converges in beliefs to a mixed strategy Nash equilibrium. For (generalized) ordinal potential games, one may also deduce the convergence of the joint strategy fictitious play to a pure strategy Nash equilibrium with probability one~\cite{marden2009joint}. These families of games are certainly more general than (exact) potential games. In this paper, as a starting point, we present necessary conditions for the existence of (exact) potential functions as well as imposing congestion taxes for guaranteeing the existence of such functions. Although conservative, this approach perhaps can be justified in the introduced problem due to the existence of intuitive taxing and subsidy policies (see Subsection~\ref{subsec:taxforpotential}).  A viable direction for future work is to investigate necessary and sufficient conditions so that a congestion game belongs to the category of ordinal or weighted potential games. In addition, as also mentioned earlier, congestion taxes were presented in~\cite{marden2009joint,marden2009payoff,gameroad} so that the congestion game admits the social welfare as a potential function. However, in contrast to the results of this paper, the authors of~\cite{marden2009joint,marden2009payoff} introduced a congestion tax for all the agents (and not only a subset of them) to improve the efficiency and considered homogeneous congestion games (with only one type of agents).

Finally, using the parameters extracted from the real congestion data, we construct a simulation setup to study the performance of the learning algorithms as well as the properties of the Nash equilibrium. For instance, we study the robustness to perturbations of the learning algorithm, e.g., accidents along the road, sudden weather changes, or temporary road constructions. We also consider the case when the drivers value their time differently, where the values are motivated by survey data from Stockholm area~\cite{Inregiareport}.

\subsection{Paper Organization}
The rest of the paper is organized as follows. In Section~\ref{sec:problemsetup}, we formulate the considered congestion game. We find a necessary condition of the existence of a potential function in Section~\ref{sec:potential}. In Sections~\ref{sec:JSFP} and~\ref{sec:ASFP}, we respectively introduce the joint strategy fictitious play and the average strategy fictitious play to learn a Nash equilibrium of the congestion game. Finally, we present the simulations in Section~\ref{sec:numericalexample} and conclude the paper in Section~\ref{sec:conclusion}.

\subsection{Notation}
Let $\mathbb{R}$, $\mathbb{Z}$, and $\mathbb{N}$ denote the sets of real, integer, and natural numbers, respectively. Furthermore, let $\mathbb{N}_0=\mathbb{N}\cup \{0\}$. We define $\N=\{1,\dots,N\}$ for any $N\in\mathbb{N}$. In this paper, all other sets are denoted by calligraphic letters such as $\mathcal{R}$. We use $|\mathcal{R}|$ to denote the cardinality of $\mathcal{R}$. Finally, we define the characteristic function $\mathbf{1}_{x=y}$ ($\mathbf{1}_{x\geq y}$) to be equal one whenever $x=y$ ($x\geq y$) holds true and to be equal to zero otherwise.
\section{Game-Theoretic Model} \label{sec:problemsetup}
We model the traffic flow at certain time intervals of the day on a given road using an atomic congestion game. The agents in this congestion game are the vehicles (or, rather the drivers of these vehicles) and their actions are the time intervals that they choose to use the road at each day. Let us divide the time of the day into $R\in\mathbb{N}$ non-overlapping intervals and denote each interval by $r_i$ for $i\in\R$. The set of all these intervals (i.e., agents' actions) is denoted by $\mathcal{R}=\{r_1,r_2,\dots,r_R\}$. We consider the case where the underlying congestion game is composed of two types of agents. As specified in the introduction, we name the agents of the first type cars and the agents of the second type trucks throughout the paper. We assume $N$ cars and $M$ trucks are playing in this congestion game and denote the actions of the cars and the trucks by $z=\{z_i\}_{i=1}^N$ and $x=\{x_i\}_{i=1}^M$, respectively. Let us describe the utilities of the cars and the trucks in the following subsections.

\subsection{Car Utility}
Car $i\in\N$ maximizes its utility given by
\begin{equation} \label{eqn:util:1}
U_i(z_i,z_{-i},x)=\xi_i^{\mathrm{c}}(z_i,T_i^{\mathrm{c}}) +v_{z_i}(z,x)+p_i^{\mathrm{c}}(z,x),
\end{equation}
where the mapping $\xi_i^{\mathrm{c}}:\mathcal{R}\times \mathcal{R} \rightarrow\mathbb{R}$ describes the penalty for deviating from the preferred time interval for using the road denoted by $T_i^{\mathrm{c}}\in\mathcal{R}$ (e.g., due to being late for work or delivering goods), $v_{z_i}(z,x)$ is the average velocity of the traffic flow at time interval $z_i$, and $p_i^{\mathrm{c}}(z,x)$ is a potential congestion tax for using the road on a specific time interval.

Following~\cite{gameroad,bell1997transportation,garavello2006traffic}, we assume that $v_r(z,x)$ (i.e., the average velocity at time interval $r\in\mathcal{R}$) is linearly dependent on the road congestion
\begin{equation} 
n_r(z,x)=\sum_{\ell=1}^N \mathbf{1}_{\{z_\ell=r\}}+\sum_{\ell=1}^M \mathbf{1}_{\{x_\ell=r\}},
\end{equation}
which is the total number of vehicles (both cars and trucks) that are using the road at $r\in\mathcal{R}$. 
Let us use real traffic data from sensors on the northbound E4 highway in Stockholm from Lilla Essingen to the end of Fredh\"{a}llstunneln (see Figure~\ref{figuremap}) to validate this assumption. The measurements are extracted during October 1--15, 2012. Figure~\ref{figure_validation} illustrates the average velocity of the flow as a function of the number of vehicles. As we can see, for up to 1000 vehicles, a linear relationship 
\begin{equation} \label{eqn:affineaverage}
v_r(z,x)=an_r(z,x)+b
\end{equation}
with $a=-0.0110$ and $b=84.9696$ describes the data well. However, for higher numbers of the vehicles, it fails to capture the behavior of around 20\% of the data (shown by the red dots in Figure~\ref{figure_validation}). Note that some of these outlier measurements can be caused by traffic accidents, sudden weather changes during the day, or temporary road constructions. A viable direction for future work is to introduce more complex velocity models in which the average velocity of the traffic flow may depend on the number of vehicles in the neighboring time intervals in addition to the current one. We may also need to  separate the effect of cars and trucks as one may expect heavier and larger vehicles to contribute more to the traffic congestion. However, in this paper, we use the simple model presented in~\eqref{eqn:affineaverage} and instead focus on platooning incentives.

\begin{figure}[!t]
\centering
\includegraphics[width=0.4\linewidth]{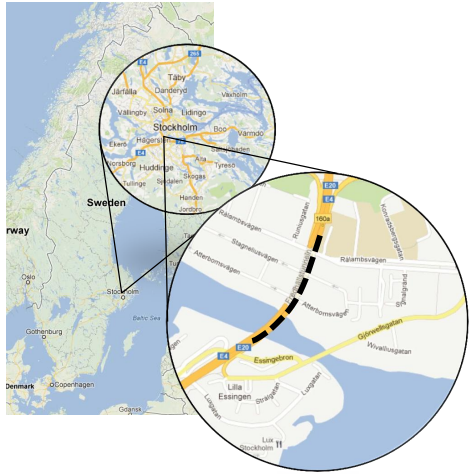}
\caption{\label{figuremap} The dashed black curve shows the segment of northbound E4 highway between Lilla Essingen and Fredh\"{a}llstunneln in Stockholm where we are using to validate the model and extract reasonable parameters. }
\end{figure}

\begin{figure}[!t]
\centering
\includegraphics[width=0.4\linewidth]{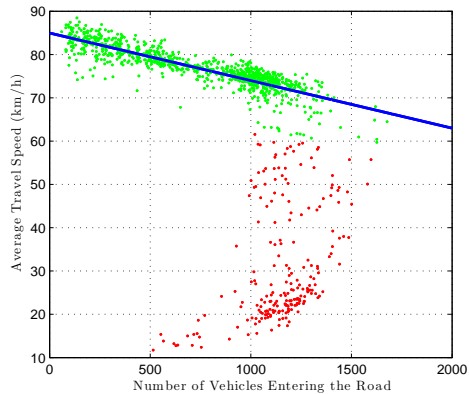}
\caption{\label{figure_validation} Average velocity of the traffic flow as a function of the number of vehicles that are entering the segment of northbound E4 highway between Lilla Essingen and Fredh\"{a}llstunneln for $15\,\mathrm{min}$ time intervals.}
\end{figure}

The choice of the penalty mappings $\xi_i^{\mathrm{c}}$, $i\in\N$, does not change the theoretical results presented in the paper, but it can capture various models of the drivers. For instance, following~\cite{gameroad}, we can use $\xi_i^{\mathrm{c}}(z_i,T_i^{\mathrm{c}})=\alpha_i^{\mathrm{c}} |z_i-T_i^{\mathrm{c}}|$, with scalar $\alpha_i^{\mathrm{c}}<0$, to describe the case where the driver of car $i$ is  penalized by deviating from the preferred time interval. With this function, the driver get penalized symmetrically no matter if she uses the road sooner or later than $T^\mathrm{c}_i$. By increasing $|\alpha_i^{\mathrm{c}}|$, she becomes less flexible. Another penalty function is $\xi_i^{\mathrm{c}}(z_i,T_i^{\mathrm{c}}) =\alpha_i^{\mathrm{c}} \max(z_i-T_i^{\mathrm{c}},0)$, which penalizes the driver of car $i$ only for being late. For the simulations in the paper, we assume that all vehicles use the first penalty mapping.

\subsection{Truck Utility}
Truck $j\in\M$ maximizes its utility given by
\begin{equation} \label{eqn:util:2}
\begin{split}
V_j(x_j,x_{-j},z)=&\,\xi_j^{\mathrm{t}}(x_j,T_j^{\mathrm{t}}) +v_{x_j}(z,x)+p_i^{\mathrm{t}}(z,x)+\beta v_{x_j}(z,x)g(m_{x_j}(x)),
\end{split}
\end{equation}
where, similar to the utilities of the cars, $\xi_j^{\mathrm{t}}(x_j,T_j^{\mathrm{t}})$ is the penalty for deviating from the preferred time $T_j^{\mathrm{t}}$ for using the road, $v_{x_j}(z,x)$ is the average velocity of the traffic flow, and $p_i^{\mathrm{t}}(z,x)$ is a potential congestion tax for using the road at time interval $x_j$. Trucks have an extra term $\beta v_{x_j}(z,x)g(m_{x_j}(x))$ in their utility because of their benefit in using the road at the same time as the other trucks. Here, $g:\M\rightarrow \mathbb{R}$ is a nondecreasing function and $m_r(x)=\sum_{\ell=1}^M \mathbf{1}_{\{x_\ell=r\}}$ is the number of trucks that are using the road at time interval $r\in\mathcal{R}$. The increased utility can be justified by the fact that whenever there are many trucks on the road at the same time interval, they can potentially collaborate to form platoons and thereby increase the fuel efficiency. It should be noted that this extra utility is a function of the average velocity of the flow since trucks cannot save a significant amount of fuel through platooning whenever traveling at low velocities~\cite{AlAlam10,Alam11-thesis}. The function $g:\M\rightarrow \mathbb{R}$ describes the dependency of the platooning incentive on the number of trucks that are using the road at that time interval. Again, the choice of this function does not change the mathematical results presented in this paper, but it can help us to capture the relationship between the fuel saving and the number of the trucks on the road. For instance, $g(m_{x_j}(x))=m_{x_j}(x)$ shows that the vehicles can even benefit from a low number of trucks but $g(m_{x_j}(x))=m_{x_j}(x) \mathbf{1}_{m_{x_j}(x)\geq \tau}$ describes the case where the trucks do not benefit until they reach a critical number $\tau\in\mathbb{N}$. For the simulations, we use the first mapping.

Notice that in the utilities $U_i$ in~\eqref{eqn:util:1} and $V_j$ in~\eqref{eqn:util:2}, we introduced  congestion taxes for cars and trucks. Later, they are used to ensure that the described game is a potential game. Such a game admits at least one pure strategy Nash equilibrium and we can use joint strategy fictitious play and average strategy fictitious play to learn that equilibrium. A viable direction for future research could be to design taxing policies so as to enforce a socially optimal behavior, such as an optimal carbon emission profile, using mechanism design theory~\cite{RefWorks:83}.

\subsection{Congestion Game}

Now, we are ready to define a congestion game with two types of players using normal-form representation of strategic games~\cite{gibbons1992game,osborne1994course}.

\begin{definition}\textsc{(Car--Truck Congestion Game)}: \label{def:cartruck} A car--truck congestion game is defined as a tuple $\mathcal{G}=((\mathcal{R})_{i=1}^{N+M};((U_i)_{i=1}^N,(V_j)_{j=1}^M))$, that is, a combination of $N+M$ players with action space $(\mathcal{R})_{i=1}^{N+M}$ and utilities $((U_i)_{i=1}^N,(V_j)_{j=1}^M))$.
\end{definition}

A pure strategy Nash equilibrium for a car--truck congestion game is a pair $(z,x)\in\mathcal{R}^N\times \mathcal{R}^M$ such that 
\begin{align*}
U_i(z_i,z_{-i},x)&\geq U_i(z'_i,z_{-i},x), && \forall z'_i\in\mathcal{R}, && i\in\N, \\
V_j(x_j,x_{-j},z)&\geq V_j(x'_j,x_{-j},z), && \forall x'_j\in\mathcal{R}, && j\in\M.
\end{align*}
To prove the existence of a pure strategy Nash equilibrium or to use various learning algorithms for finding an equilibrium, we focus on a subclass of games, namely, potential games~\cite{monderer1996potential}. A car--truck congestion game is a potential game with potential function $\Phi:\mathcal{R}^N\times \mathcal{R}^M\rightarrow \mathbb{R}$ if 
\begin{align*}
\Phi(x,z_i,z_{-i})&-\Phi(x,z'_i,z_{-i})=U_i(z_i,z_{-i},x)-U_i(z'_i,z_{-i},x), && \forall i\in\N, \\
\Phi(x_j,x_{-j},z)&-\Phi(x'_j,x_{-j},z)=V_j(x_j,x_{-j},z)-V_j(x'_j,x_{-j},z), && \forall j\in\M.
\end{align*}
With these definitions in hand, we are ready to present the results of the paper.

\section{Existence of Potential Function} \label{sec:potential}
Atomic congestion games with one type of agents (corresponding to the case where $M=0$ or $N=0$) are known to admit a potential function even without congestion taxes~\cite{gameroad,roughgarden2007routing,monderer1996potential}. In this section, we show that this property does not hold for car--truck congestion games unless we devise an appropriate taxing scheme. 

\subsection{Necessary Condition for the Existence of a Potential Function}
Let $\Phi:\mathbb{R}^{N}\times \mathbb{R}^{M}\rightarrow \mathbb{R}$ be a given mapping. Define
\begin{equation*}
\begin{split}
\Delta_{x_j\rightarrow x'_j} \Phi(x,z)&=\Phi(x,z)-\Phi(x',z), \\
\Delta_{z_i\rightarrow z'_i} \Phi(x,z)&=\Phi(x,z)-\Phi(x,z'),
\end{split}
\end{equation*}
where $x'=(x'_j,x_{-j})$ and $z'=(z'_i,z_{-i})$. Using simple algebra, we can show that the operators commute, i.e.,
$$
\Delta_{z_i\rightarrow z'_i}\Delta_{x_j\rightarrow x'_j} \Phi(x,z)=\Delta_{x_j\rightarrow x'_j} \Delta_{z_i\rightarrow z'_i}\Phi(x,z).
$$
Now, we are ready to prove the following useful result.

\begin{proposition} \label{lemma:necessity} A car--truck congestion game admits a potential function only if
\begin{equation*}
\begin{split}
\Delta_{x_i\rightarrow x'_j}\Delta_{z_i\rightarrow z'_i} V_j(z,x)=\Delta_{z_i\rightarrow z'_i} \Delta_{x_i\rightarrow x'_j} U_i(z,x), 
\end{split}
\end{equation*}
for all $i\in\N$ and $j\in\M$.
\end{proposition}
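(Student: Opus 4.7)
The plan is to exploit the commutativity of the two difference operators (already recorded in the excerpt) together with the two defining equations of a potential function. The key observation is that the potential function definition gives us two separate expressions for first differences of $\Phi$, one in the car direction and one in the truck direction, and we can turn either of them into the common mixed second difference of $\Phi$ by applying the other operator; equating the two resulting expressions yields the claim.

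Concretely, assume the game admits a potential $\Phi$. First, from the defining equation for car $i$, rewritten with our notation, I would derive
\begin{equation*}
\Delta_{z_i\rightarrow z'_i}\Phi(x,z)=\Delta_{z_i\rightarrow z'_i}U_i(z,x),
\end{equation*}
for every admissible $z'_i\in\mathcal{R}$ and every fixed $x$. Similarly, from the defining equation for truck $j$,
\begin{equation*}
\Delta_{x_j\rightarrow x'_j}\Phi(x,z)=\Delta_{x_j\rightarrow x'_j}V_j(z,x),
\end{equation*}
for every admissible $x'_j\in\mathcal{R}$ and every fixed $z$.

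Next I would apply $\Delta_{x_j\rightarrow x'_j}$ to both sides of the first identity and $\Delta_{z_i\rightarrow z'_i}$ to both sides of the second. This produces on the right-hand sides the two mixed second differences appearing in the statement, and on the left-hand sides the two expressions $\Delta_{x_j\rightarrow x'_j}\Delta_{z_i\rightarrow z'_i}\Phi(x,z)$ and $\Delta_{z_i\rightarrow z'_i}\Delta_{x_j\rightarrow x'_j}\Phi(x,z)$. Invoking the commutativity identity stated just before the proposition, these two left-hand sides are equal, so the two right-hand sides must be equal as well, which is exactly the asserted necessary condition (the subscript $x_i$ appearing in the statement is evidently a typographical slip for $x_j$).

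There is essentially no hard step here: the argument is a two-line manipulation. The only thing to be careful about is bookkeeping of the conventions, in particular that $\Delta_{z_i\rightarrow z'_i}$ in the proposition is defined with $\Phi(x,z)-\Phi(x,z')$ rather than the reverse, so that the signs on both sides of the potential-function identity match and no stray minus sign is introduced when passing to the mixed differences. Once this is checked, the proof is complete; the real content of the proposition will lie in its consequences, when it is applied in the next subsection to show that without a carefully chosen tax--subsidy scheme the identity fails and hence no potential exists.
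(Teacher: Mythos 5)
Your proof is correct and follows essentially the same route as the paper's: both arguments amount to computing the mixed second difference of the potential $\Phi$ once via the car identity and once via the truck identity, and then invoking the commutativity $\Delta_{z_i\rightarrow z'_i}\Delta_{x_j\rightarrow x'_j}\Phi=\Delta_{x_j\rightarrow x'_j}\Delta_{z_i\rightarrow z'_i}\Phi$ to equate the two resulting expressions. The paper organizes this as a substitution-and-rearrangement chain starting from $\Delta_{x_j\rightarrow x'_j}V_j=\Delta_{x_j\rightarrow x'_j}\Phi$, whereas you apply the second operator symmetrically to both defining identities, which is arguably cleaner but not a genuinely different argument; you are also right that the $x_i$ subscript in the statement is a typographical slip for $x_j$.
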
 

\begin{proof} Let $\Phi(x,z)$ be a potential function for the congestion game. Then, it must satisfy
\begin{equation} \label{eqn:necessity:proof:1}
\begin{split}
\Delta_{x_j\rightarrow x'_j} V_j(x,z)=\Delta_{x_j\rightarrow x'_j} \Phi(x,z).
\end{split}
\end{equation}
Let $x'=(x'_j,x_{-j})$ and $z'=(z'_i,z_{-i})$. Again, when noting that $\Phi(x,z)$ is a potential function, we get
\begin{subequations} \label{eqn:necessity:proof:2}
\begin{align}
\Phi(x,z)&=\Phi(x,z')+\Delta_{z_i\rightarrow z'_i} U_i(z,x)\\
\Phi(x',z)&=\Phi(x',z')+\Delta_{z_i\rightarrow z'_i} U_i(z,x')
\end{align}
\end{subequations}
Substituting~\eqref{eqn:necessity:proof:2} into~\eqref{eqn:necessity:proof:1} results in
\begin{equation*}
\begin{split}
\Delta_{x_j\rightarrow x'_j} V_j(x,z)\hspace{-.04in}=&\Phi(x,z)-\Phi(x',z)
\\=&\Delta_{x_j\rightarrow x'_j} \Phi(x,z')+\Delta_{z_i\rightarrow z'_i} U_i(z,x)-\Delta_{z_i\rightarrow z'_i} U_i(z,x')\\
=&\Delta_{x_j\rightarrow x'_j} \Phi(x,z')\hspace{-.04in}+\hspace{-.04in}\Delta_{z_i\rightarrow z'_i} \Delta_{x_i\rightarrow x'_j} U_i(z,x)
\\=&\Delta_{x_j\rightarrow x'_j} V_j(x,z')\hspace{-.04in}+\hspace{-.04in}\Delta_{z_i\rightarrow z'_i} \Delta_{x_i\rightarrow x'_j} U_i(z,x),
\end{split}
\end{equation*}
where the last equality follows from the definition of the potential function. Therefore, we get the identity in the statement of the theorem.
\end{proof}

This shows that it might not be possible to find a potential functions for the congestion game with two types of players.

\begin{corollary} Let $p_i^{\mathrm{c}}(z,x)=0$ for $i\in\N$ and $p_j^{\mathrm{t}}(z,x)=0$ for $j\in\M$. A car--truck congestion game admits a potential function only if $\beta=0$ or $g$ is equal to zero everywhere.
\end{corollary}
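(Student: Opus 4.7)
The strategy is to apply Proposition~\ref{lemma:necessity} with all congestion taxes set to zero and then isolate the contribution of the platooning bonus $\beta v_{x_j}(z,x)g(m_{x_j}(x))$ by a careful choice of the players' actions. My plan is to compute both mixed differences in closed form, observe that the pieces coming from the common affine velocity term cancel, and then argue that the surviving identity forces either $\beta=0$ or $g\equiv 0$.

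The first step is to compute $\Delta_{z_i\to z_i'} V_j(x,z)$. Using the affine velocity model $v_r(z,x)=an_r(z,x)+b$ together with the identity $n_r(z,x)-n_r(z',x)=\mathbf{1}_{z_i=r}-\mathbf{1}_{z_i'=r}$, and noting that neither $\xi^{\mathrm{t}}_j$ nor $m_{x_j}(x)$ depends on $z$, I would obtain
\[
\Delta_{z_i\to z_i'} V_j(x,z)=a\bigl[1+\beta g(m_{x_j}(x))\bigr]\bigl(\mathbf{1}_{z_i=x_j}-\mathbf{1}_{z_i'=x_j}\bigr).
\]
Applying $\Delta_{x_j\to x_j'}$ next yields a copy of the same expression with $x_j$ replaced by $x_j'$ and $x$ by $x'=(x_j',x_{-j})$, subtracted from the above. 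An analogous calculation on the car side (with no tax and no $g$ term) gives $\Delta_{z_i\to z_i'}\Delta_{x_j\to x_j'} U_i(z,x)=a\bigl[\mathbf{1}_{z_i=x_j}-\mathbf{1}_{z_i'=x_j}-\mathbf{1}_{z_i=x_j'}+\mathbf{1}_{z_i'=x_j'}\bigr]$. Equating the two via Proposition~\ref{lemma:necessity}, the linear pieces match on both sides and the condition collapses to
\[
a\beta\Bigl\{g(m_{x_j}(x))\bigl[\mathbf{1}_{z_i=x_j}-\mathbf{1}_{z_i'=x_j}\bigr]-g(m_{x_j'}(x'))\bigl[\mathbf{1}_{z_i=x_j'}-\mathbf{1}_{z_i'=x_j'}\bigr]\Bigr\}=0,
\]
which must hold for every choice of $z_i,z_i',x_j,x_j'\in\mathcal{R}$ and every background profile $(z_{-i},x_{-j})$.

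To finish, I would pick three distinct time intervals $r_1,r_2,r_3\in\mathcal{R}$ and set $x_j=z_i=r_1$, $z_i'=r_3$, $x_j'=r_2$. This kills three of the four indicators and leaves $a\beta g(m_{r_1}(x))=0$. Because the $M-1$ remaining trucks are free, I can realize $m_{r_1}(x)=k$ for any $k\in\M$ by co-locating $k-1$ of them with truck $j$ at $r_1$. Since $a\neq 0$ (from the empirical fit, where $a=-0.0110$), this forces $\beta g(k)=0$ for every $k\in\M$, so either $\beta=0$ or $g$ vanishes on its entire domain. The main obstacle I foresee is purely bookkeeping---keeping the four indicators and the two truck counts $m_{x_j}(x)$, $m_{x_j'}(x')$ straight through the two differencing steps. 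The corner case $R=2$ requires a separate configuration since three distinct slots cannot be chosen; there one can still pin down $g$ by combining its monotonicity with the symmetric identity $g(k)+g(M-k+1)=0$ that arises from the only admissible labeling.
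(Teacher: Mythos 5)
Your computation of the two mixed differences and the resulting necessary condition
\[
a\beta\Bigl\{g(m_{x_j}(x))\bigl[\mathbf{1}_{z_i=x_j}-\mathbf{1}_{z_i'=x_j}\bigr]-g(m_{x_j'}(x'))\bigl[\mathbf{1}_{z_i=x_j'}-\mathbf{1}_{z_i'=x_j'}\bigr]\Bigr\}=0
\]
is exactly the paper's route: it coincides with the penultimate line of the paper's long display (and your factor $1+\beta g$ carries the correct sign, where the paper writes $1-\beta g$). Where you diverge is the endgame. The paper further collapses this expression into a product containing $\mathbf{1}_{x_j=z_i'}\mathbf{1}_{x_j'=z_i}-\mathbf{1}_{x_j=z_i}\mathbf{1}_{x_j'=z_i'}$ and concludes from $g(m_{x_j}(x))+g(m_{x_j'}(x'))=0$; you instead take the off-diagonal choice $x_j=z_i=r_1$, $x_j'=r_2$, $z_i'=r_3$ to kill three of the four indicators and read off $g(k)=0$ for every $k\in\M$ directly. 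Your configuration is cleaner (no intermediate step via $g(1)=0$), and it incidentally reveals that the paper's final collapsed expression is not an identity --- it vanishes on your configuration while the true difference does not --- so your form of the condition is the more reliable one. Both arguments tacitly require $a\neq 0$, which you at least state; note that if $a=0$ the corollary is actually false, so this hypothesis is genuinely needed.

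The one real gap is your $R=2$ fallback. There the admissible labelings do yield $g(k)+g(M-k+1)=0$, but monotonicity of $g$ does not then force $g\equiv 0$: for $M=2$ the function $g(1)=-1$, $g(2)=1$ is nondecreasing, satisfies the identity, and is not identically zero. So for $R=2$ and $M\geq 2$ the car--truck cross-condition of Proposition~\ref{lemma:necessity} alone does not settle the claim, and one would need a further necessary condition (for instance the analogous commutation identity applied to two trucks) to handle that case. In fairness, the paper's own proof also passes over $R=2$ in silence; but your proposed patch for it is incorrect as stated, so you should either restrict the argument to $R\geq 3$ or replace the patch with a genuine argument for two time slots.
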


\begin{proof} First, by simple algebraic manipulations, we prove the identity in
\begin{equation} \label{eqn:longequation:1}
\begin{split}
\Delta_{x_i\rightarrow x'_j}\Delta_{z_i\rightarrow z'_i} V_j(z,x)&=\Delta_{x_i\rightarrow x'_j}\Delta_{z_i\rightarrow z'_i} \big(\xi_j^{\mathrm{t}}(x_j,T_j^{\mathrm{t}}) +v_{x_j}(z,x) +\beta v_{x_j}(z,x)g(m_{x_j}(x))\big)
\\&=\Delta_{x_i\rightarrow x'_j}\Delta_{z_i\rightarrow z'_i} \big(v_{x_j}(z,x) +\beta v_{x_j}(z,x)g(m_{x_j}(x))\big)
\\&=\Delta_{x_i\rightarrow x'_j}\big(v_{x_j}(z,x)\hspace{-.03in}-\hspace{-.03in}v_{x_j}(z',x)\hspace{-.03in}+\hspace{-.03in}\beta v_{x_j}(z,x)g(m_{x_j}(x))\hspace{-.03in}-\hspace{-.03in}\beta v_{x_j}(z',x)g(m_{x_j}(x))\big)
\\&=\Delta_{x_i\rightarrow x'_j}\big( a[\mathbf{1}_{x_j=z_i}-\mathbf{1}_{x_j=z'_i}][1-\beta g(m_{x_j}(x))]\big)
\\&=a[\mathbf{1}_{x_j=z_i}-\mathbf{1}_{x_j=z'_i}][1-\beta g(m_{x_j}(x))]-a[\mathbf{1}_{x'_j=z_i}-\mathbf{1}_{x'_j=z'_i}][1-\beta g(m_{x'_j}(x'))]
\\&=a[\mathbf{1}_{x_j=z_i}+\mathbf{1}_{x'_j=z'_i}-\mathbf{1}_{x_j=z'_i}-\mathbf{1}_{x'_j=z_i}]
\\&\hspace{.2in}-a\beta[\mathbf{1}_{x_j=z_i}-\mathbf{1}_{x_j=z'_i}]g(m_{x_j}(x))+a\beta[\mathbf{1}_{x'_j=z_i}-\mathbf{1}_{x'_j=z'_i}]g(m_{x'_j}(x'))
\\&=a[\mathbf{1}_{x_j=z_i}+\mathbf{1}_{x'_j=z'_i}-\mathbf{1}_{x_j=z'_i}-\mathbf{1}_{x'_j=z_i}]
\\&\hspace{.2in}+a\beta[\mathbf{1}_{x_j=z'_i}\mathbf{1}_{x'_j=z_i}-\mathbf{1}_{x_j=z_i}\mathbf{1}_{x'_j=z'_i}][1-\mathbf{1}_{z_j=z'_i}][g(m_{x_j}(x))+g(m_{x'_j}(x'))]
\end{split}
\end{equation}
Similarly, we can show that
\begin{equation*}
\begin{split}
\Delta_{z_i\rightarrow z'_i} \Delta_{x_i\rightarrow x'_j} &U_i(z,x)=a[\mathbf{1}_{x_j=z_i}+\mathbf{1}_{x'_j=z'_i}-\mathbf{1}_{x_j=z'_i}-\mathbf{1}_{x'_j=z_i}].
\end{split}
\end{equation*}
Therefore, following Proposition~\ref{lemma:necessity}, the introduced congestion game admits a potential function only if
\begin{equation*}
\begin{split}
\beta[\mathbf{1}_{x_j=z'_i}\mathbf{1}_{x'_j=z_i}&-\mathbf{1}_{x_j=z_i}\mathbf{1}_{x'_j=z'_i}][1-\mathbf{1}_{z_j=z'_i}][g(m_{x_j}(x))+g(m_{x'_j}(x'))]=0
\end{split}
\end{equation*}
for all $x,z$ and $x'_j,z'_i$. This is indeed only possible if $\beta=0$ or if $g$ is equal to zero everywhere.
\end{proof}

Potential games have many desirable attributes. For instance, these games always admit at least one pure strategy Nash equilibrium. In addition, many learning algorithms, such as, joint strategy fictitious play, are known to extract a pure strategy Nash equilibrium for potential games. Given these important properties, a natural question that comes to mind is that whether it is possible to guarantee the existence of a potential function by imposing appropriate congestion taxes. We answer this question in the next subsection.

\subsection{Imposing Taxes to Guarantee the Existence of a Potential Function} \label{subsec:taxforpotential}
In this subsection, we propose a taxing and a subsidy policy that guarantee the existence of a potential function for the car--truck congestion game.

\begin{theorem} \label{tho:1} Let each car $i\in\N$ pay the congestion tax
\begin{equation} \label{eq:tax}
p_i^{\mathrm{c}}(z,x)=a\beta\sum_{\ell=1}^{m_{z_i}(x)}g(\ell),
\end{equation}
for using the road at time interval $z_i\in\mathcal{R}$. Then, the car--truck congestion game is a potential game with the potential function
\begin{align*}
\Phi(x,z)=&\sum_{i=1}^N \xi_i^{\mathrm{c}}(z_i,T_i^{\mathrm{c}})+\sum_{j=1}^M \xi_j^{\mathrm{t}}(x_j,T_j^{\mathrm{t}})+\sum_{r=1}^R\sum_{k=1}^{n_r(x,z)}(ak+b)\\&+\sum_{r=1}^R\beta(an_r(x,z)+b) \sum_{\ell=1}^{m_r(x)}g(\ell)- a\beta\sum_{r=1}^R\sum_{\ell=1}^{m_r(x)}\sum_{k=1}^{\ell-1}g(k).
\end{align*}
Furthermore, this game admits at least one pure strategy Nash equilibrium.
\end{theorem}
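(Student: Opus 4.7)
The plan is to verify by direct computation that the proposed $\Phi$ satisfies both defining identities of a potential function (one identity per player type), and then to invoke the standard fact that a potential game on a finite action space always admits a pure strategy Nash equilibrium, since every global maximizer of $\Phi$ on $\mathcal{R}^N\times\mathcal{R}^M$ must be one: a profitable unilateral deviation would, by the potential property, also raise $\Phi$, contradicting maximality. I would not attempt to use Proposition~\ref{lemma:necessity} here, as it only provides a necessary condition; the real content is the construction of $\Phi$, and the tax~\eqref{eq:tax} is reverse-engineered so as to make the identities hold (with $p_j^{\mathrm{t}}\equiv 0$).

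For the car side I would fix a deviation $z_i\to z'_i$ with $z_i\neq z'_i$ and ask which summands of $\Phi$ are affected. The quantity $m_r(x)$ does not depend on $z$, so only $\xi_i^{\mathrm{c}}$, the velocity-integral term $\sum_r\sum_{k=1}^{n_r}(ak+b)$, and the $\beta$-weighted term $\sum_r\beta(an_r+b)\sum_{\ell=1}^{m_r}g(\ell)$ contribute. Since the car's move only changes $n_r$ at $r=z_i$ (down by one) and at $r=z'_i$ (up by one), a one-line telescoping of the velocity-integral term yields $v_{z_i}(z,x)-v_{z'_i}(z',x)$, and the $\beta$-weighted term collapses to $p_i^{\mathrm{c}}(z,x)-p_i^{\mathrm{c}}(z',x)$ because $\sum_{\ell=1}^{m_r}g(\ell)$ is held fixed. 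Adding the $\xi_i^{\mathrm{c}}$ piece recovers exactly $U_i(z_i,z_{-i},x)-U_i(z'_i,z_{-i},x)$.

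The truck side is the main obstacle, because a deviation $x_j\to x'_j$ simultaneously changes both $n_r$ and $m_r$ at $r=x_j$ and $r=x'_j$, so the $\beta$-weighted term generates ``cross'' contributions. By carefully shifting the inner summation index when the upper limit $m_r$ drops or increases by one, the difference of $\sum_r\beta(an_r+b)\sum_{\ell=1}^{m_r}g(\ell)$ under the deviation splits into a desired piece $\beta[v_{x_j}(z,x)g(m_{x_j}(x))-v_{x'_j}(z,x')g(m_{x'_j}(x'))]$ plus a spurious remainder of the form $a\beta\bigl[\sum_{\ell=1}^{m_{x_j}(x)-1}g(\ell)-\sum_{\ell=1}^{m_{x'_j}(x')-1}g(\ell)\bigr]$. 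The final term $-a\beta\sum_r\sum_{\ell=1}^{m_r}\sum_{k=1}^{\ell-1}g(k)$ of $\Phi$ is engineered precisely so that its own difference under the same deviation equals the negative of that remainder (one telescoping on the outer $\ell$-sum at each of $r=x_j$ and $r=x'_j$), so the two remainders cancel. Combining this with the velocity-integral contribution $v_{x_j}(z,x)-v_{x'_j}(z,x')$ and the $\xi_j^{\mathrm{t}}$ difference reproduces exactly $V_j(x_j,x_{-j},z)-V_j(x'_j,x_{-j},z)$, completing the verification. Finiteness of $\mathcal{R}^N\times\mathcal{R}^M$ then delivers the pure strategy Nash equilibrium via the maximizer argument above.
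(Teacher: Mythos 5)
Your proposal is correct and follows essentially the same route as the paper's proof: the same term-by-term decomposition of $\Phi$ (penalties, velocity-integral, $\beta$-weighted term, and the correction term), the same telescoping computations for each player type, and the same observation that the last summand of $\Phi$ is there precisely to cancel the spurious remainder $a\beta\bigl[\sum_{\ell=1}^{m_{x_j}(x)-1}g(\ell)-\sum_{\ell=1}^{m_{x'_j}(x')-1}g(\ell)\bigr]$ produced by the truck deviation. The concluding existence claim via maximizing $\Phi$ over the finite action space is exactly the standard potential-game fact the paper invokes.
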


\begin{proof} The proof of this lemma follows the same line of reasoning as in the proof of Proposition 4.1 in~\cite{gameroad}. First, we need to define the following notations
\begin{align*}
\Phi_1(x,z)&=\sum_{i=1}^N \xi_i^{\mathrm{c}}(z_i,T_i^{\mathrm{c}})+\sum_{j=1}^M \xi_j^{\mathrm{t}}(x_j,T_j^{\mathrm{t}}), \\[-.5em]
\Phi_2(x,z)&=\sum_{r=1}^R\sum_{k=1}^{n_r(x,z)}(ak+b), \\[-.5em]
\Phi_3(x,z)&=\sum_{r=1}^R\beta(an_r(x,z)+b) \sum_{\ell=1}^{m_r(x)}g(\ell), \\[-.5em]
\Phi_4(x,z)&=- a\beta\sum_{r=1}^R\sum_{\ell=1}^{m_r(x)}\sum_{k=1}^{\ell-1}g(k).
\end{align*}
Let us start by analyzing the trucks. If $x_j=x'_j$, the result trivially holds. Consequently, we consider the case where $x_j\neq x'_j$, which results in
\begin{equation*}
\begin{split}
\Phi(x_j,x_{-j},z)-\Phi(x'_j,x_{-j},z)=\sum_{k=1}^4\Phi_k(x_j,x_{-j},z)-\Phi_k(x'_j,x_{-j},z).
\end{split}
\end{equation*}
We continue the proof by considering each term of this summation separately. For the first term, clearly, we have
\begin{equation*}
\begin{split}
\Phi_1(x_j,x_{-j},z)-\Phi_1(x'_j,x_{-j},z) =\xi_j^{\mathrm{t}}(x_j,T_j^{\mathrm{t}}) -\xi_j^{\mathrm{t}}(x'_j,T_j^{\mathrm{t}}).
\end{split}
\end{equation*}
Let us define $x'=(x'_j,x_{-j})$. For the second term, we have
\begin{equation*}
\begin{split}
\Phi_2(x_j,x_{-j},z)-\Phi_2(x'_j,x_{-j},z)&=\sum_{r=1}^R \sum_{k=1}^{n_r(x,z)}(ak+b) -\sum_{r=1}^R\sum_{k=1}^{n_r(x',z)}(ak+b)\\&= \sum_{k=1}^{n_{x_j}(x,z)}(ak+b) +\sum_{k=1}^{n_{x'_j}(x,z)}(ak+b)\\ &\hspace{+.2in}-\sum_{k=1}^{n_{x_j}(x',z)}(ak+b) -\sum_{k=1}^{n_{x'_j}(x',z)}(ak+b),
\end{split}
\end{equation*}
where the second equality holds because of the fact that $n_r(x,z)= n_r(x',z)$ for all $r\neq x_j,x'_j$. Note that
\begin{equation} \label{eqn:indentity:n}
\begin{split}
n_{x_j}(x',z)=n_{x_j}(x,z)\hspace{-.04in}-\hspace{-.04in}1, \; n_{x'_j}(x,z)=n_{x'_j}(x',z)\hspace{-.04in}-\hspace{-.04in}1,
\end{split}
\end{equation}
and as a result,
\begin{equation*}
\begin{split}
\Phi_2(x_j,x_{-j},z)&-\Phi_2(x'_j,x_{-j},z) =(an_{x_j}(z,x)+b)-(an_{x'_j}(z,x')+b).
\end{split}
\end{equation*}
For the third term, we get the identity in
\begin{equation} \label{eqn:proof:long:1}
\begin{split}
\Phi_3(x_j,x_{-j},z)-\Phi_3(x'_j,x_{-j},z)&=\sum_{r=1}^R\hspace{-.03in} \beta(an_r(x,z)+b)\hspace{-.05in}\sum_{\ell=1}^{m_r(x)}\hspace{-.05in}g(\ell) \hspace{-.03in}-\hspace{-.03in}\sum_{r=1}^R\hspace{-.03in}\beta(an_r(x',z)+b)\hspace{-.05in}\sum_{\ell=1}^{m_r(x')}\hspace{-.05in}g(\ell)
\\&=\beta(an_{x_j}(x,z)+b)\sum_{\ell=1}^{m_{x_j}(x)}g(\ell)
+\beta(an_{x'_j}(x,z)+b)\sum_{\ell=1}^{m_{x'_j}(x)}g(\ell)
\\&\hspace{.2in}-\beta(an_{x_j}(x',z)+b)\hspace{-.05in}\sum_{\ell=1}^{m_{x_j}(x')} \hspace{-.05in}g(\ell) \hspace{-.03in}-\hspace{-.03in}\beta(an_{x'_j}(x',z)+b)\hspace{-.05in}\sum_{\ell=1}^{m_{x'_j}(x')}\hspace{-.05in}g(\ell)
\\&=\beta(an_{x_j}(x,z)+b)g(m_{x_j}(x)) -\beta(an_{x'_j}(x',z)+b)g(m_{x'_j}(x')) \\&\hspace{.2in}+a\beta\sum_{\ell=1}^{m_{x_j}(x)-1}g(\ell)-a\beta \sum_{\ell=1}^{m_{x'_j}(x')-1}g(\ell),
\end{split}
\end{equation}
where the last equality follows from using~\eqref{eqn:indentity:n} and the fact that $m_{x_j}(x')=m_{x_j}(x)-1 $ and $ m_{x'_j}(x)=m_{x'_j}(x')-1.$ Finally, using the same argument as in the case of the second term and the third term, we get
\begin{equation*}
\begin{split}
\Phi_4(x_j,x_{-j},z)&-\Phi_4(x'_j,x_{-j},z)=-a\beta\sum_{\ell=1}^{m_{x_j}(x)-1}g(\ell)
+a\beta\sum_{\ell=1}^{m_{x'_j}(x')-1}g(\ell).
\end{split}
\end{equation*}
Combining all these differences, we get
\begin{equation*}
\begin{split}
\Phi(x_j,x_{-j},z)-\Phi(x'_j,x_{-j},z)
=&\beta(an_{x_j}(x,z)+b)g(m_{x_j}(x)) -\beta(an_{x'_j}(x',z)\hspace{-.03in}+\hspace{-.03in}b)g(m_{x'_j}(x'))
\\&+\xi_j^{\mathrm{t}}(x_j,T_j^{\mathrm{t}}) -\xi_j^{\mathrm{t}}(x'_j,T_j^{\mathrm{t}})+(an_{x_j}(z,x)+b) -(an_{x'_j}(z,x')+b)
\\=&V_j(x_j,x_{-j},z)\hspace{-.03in}-\hspace{-.03in}V_j(x'_j,x_{-j},z).
\end{split}
\end{equation*}
Now, let us prove this fact for the cars as well. If $z_i=z'_i$, the result trivially holds. Thus, we investigate the case where $z_i\neq z'_i$. Similarly, we consider each term of the summation separately. For the first term, we have
\begin{equation*}
\begin{split}
\Phi_1(x,z_{i},z_{-i})&-\Phi_1(x,z'_{i},z_{-i})= \xi_i^{\mathrm{c}}(z_i,T_i^{\mathrm{c}})- \xi_i^{\mathrm{c}}(z'_i,T_i^{\mathrm{c}}).
\end{split}
\end{equation*}
We define the notation $z'=(z'_i,z_{-i})$. Following a similar reasoning as in the case of the trucks, for the second and the third terms, we get
\begin{equation*}
\begin{split}
\Phi_2(x,z_{i},z_{-i})&-\Phi_2(x,z'_{i},z_{-i}) =(an_{z_i}(z,x)+b)-(an_{z'_i}(z',x)+b),
\end{split}
\end{equation*}
and
\begin{equation*}
\begin{split}
\Phi_3(x,z_{i},z_{-i})\hspace{-.03in}-\hspace{-.03in}\Phi_3(x,z'_{i},z_{-i})&= a\beta\hspace{-.05in}\sum_{\ell=1}^{m_{z_i}(x)}\hspace{-.05in}g(\ell)-a\beta \hspace{-.05in}\sum_{\ell=1}^{m_{z'_i}(x)}\hspace{-.05in}g(\ell).
\end{split}
\end{equation*}
For the forth term, we get $\Phi_4(x,z_{i},z_{-i})-\Phi_4(x,z'_{i},z_{-i})=0$ 
since this term is only a function of $x$ which is not changed. Again, combining all these differences, we get
\begin{equation*}
\begin{split}
\Phi(x,z_{i},z_{-i})-\Phi(x,z'_{i},z_{-i})
=&\,(an_{z_i}(z,x)+b)-(an_{z'_i}(z',x)+b)
+\xi_i^{\mathrm{c}}(z_i,T_i^{\mathrm{c}})- \xi_i^{\mathrm{c}}(z'_i,T_i^{\mathrm{c}})\\& + a\beta\hspace{-.05in}\sum_{\ell=1}^{m_{z_i}(x)}\hspace{-.05in}g(\ell)- a\beta\hspace{-.05in}\sum_{\ell=1}^{m_{z'_i}(x)}\hspace{-.05in}g(\ell)
\\=&\,U_i(z_{i},z_{-i},x)-U_i(z'_{i},z_{-i},x).
\end{split}
\end{equation*}
Finally, note that every potential game admits at least one pure strategy Nash equilibrium~\cite{monderer1996potential}.
\end{proof}

\begin{remark} Note the tax $p_i^{\mathrm{c}}(z,x)$ grows quadratically with the number of the trucks that are using the road at that time interval if the mapping $g:\M\rightarrow \mathbb{R}$ is an affine function. Therefore, the congestion tax policy $p_i^{\mathrm{c}}(z,x)$ in Theorem~\ref{tho:1} forces the cars to avoid the time intervals that the trucks use to travel together.
\end{remark}

Instead of taxing the cars, we can also introduce a platooning subsidy for the trucks to get a potential game.

\begin{theorem} \label{lem:2} Let each truck $j\in\M$ receive the subsidy
\begin{equation} \label{eq:subsidy}
p_j^{\mathrm{t}}(x,z)=\beta (v_0-(an_{x_j}(z,x)+b))m_{x_j}(x),
\end{equation}
for a given $v_0\in\mathbb{R}$. Then, the car--truck congestion game is a potential game with the potential function
\begin{equation*}
\begin{split}
\Psi(x,z)=\,&\sum_{i=1}^N\xi_i^{\mathrm{c}}(z_i,T_i^{\mathrm{c}}) +\sum_{j=1}^M \xi_j^{\mathrm{t}}(x_j,T_j^{\mathrm{t}})+\sum_{r=1}^R \sum_{k=1}^{n_r(x,z)}(ak+b)+\beta v_0 \sum_{r=1}^R\sum_{\ell=1}^{m_r(x)}g(\ell).
\end{split}
\end{equation*}
Furthermore, this game admits at least one pure strategy Nash equilibrium.
\end{theorem}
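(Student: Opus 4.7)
The plan is to adapt the decomposition argument used for Theorem~\ref{tho:1}. I will split the candidate potential as $\Psi = \Psi_1 + \Psi_2 + \Psi_3$, where
\begin{align*}
\Psi_1(x,z) &= \sum_{i=1}^N \xi_i^{\mathrm{c}}(z_i, T_i^{\mathrm{c}}) + \sum_{j=1}^M \xi_j^{\mathrm{t}}(x_j, T_j^{\mathrm{t}}), \\
\Psi_2(x,z) &= \sum_{r=1}^R \sum_{k=1}^{n_r(x,z)}(ak+b), \\
\Psi_3(x,z) &= \beta v_0 \sum_{r=1}^R \sum_{\ell=1}^{m_r(x)} g(\ell),
\end{align*}
and then check the two potential-game identities separately for trucks and for cars. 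Existence of a pure strategy Nash equilibrium will then follow from the classical result of~\cite{monderer1996potential}.

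For a truck deviation $x_j \to x'_j$ (with $x_j \neq x'_j$), I reuse the four counting identities $n_{x_j}(x',z) = n_{x_j}(x,z)-1$, $n_{x'_j}(x,z) = n_{x'_j}(x',z)-1$, $m_{x_j}(x') = m_{x_j}(x)-1$, and $m_{x'_j}(x) = m_{x'_j}(x')-1$ exactly as in the proof of Theorem~\ref{tho:1}. With these, $\Psi_1$ contributes $\xi_j^{\mathrm{t}}(x_j,T_j^{\mathrm{t}}) - \xi_j^{\mathrm{t}}(x'_j,T_j^{\mathrm{t}})$, $\Psi_2$ telescopes down to $v_{x_j}(z,x) - v_{x'_j}(z,x')$, and $\Psi_3$ contributes $\beta v_0\bigl[g(m_{x_j}(x)) - g(m_{x'_j}(x'))\bigr]$. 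The crux of the truck step — and the main technical obstacle — is to verify that the sum of these three contributions reproduces $V_j(x_j,x_{-j},z) - V_j(x'_j,x_{-j},z)$; equivalently, that the subsidy $p_j^{\mathrm{t}}$ in~\eqref{eq:subsidy} combined with the velocity-weighted platooning reward $\beta v_{x_j}(z,x)g(m_{x_j}(x))$ yields exactly the velocity-free expression $\beta v_0 g(m_{x_j}(x))$ evaluated at the deviating coordinate. The role of the subsidy is precisely to eliminate the velocity-dependent cross terms so that $\Psi_3$ can depend only on $m_r$ and the constant $v_0$; I expect this bookkeeping, rather than any conceptual difficulty, to be the most delicate part.

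For a car deviation $z_i \to z'_i$ (with $z_i \neq z'_i$), the key observation is that $\Psi_3$ is a function of $x$ alone, so it is invariant and drops out of the computation. The remaining pieces reduce to the single-type computation already carried out in the proof of Theorem~\ref{tho:1}: $\Psi_1$ contributes $\xi_i^{\mathrm{c}}(z_i,T_i^{\mathrm{c}}) - \xi_i^{\mathrm{c}}(z'_i,T_i^{\mathrm{c}})$ and $\Psi_2$ contributes $v_{z_i}(z,x) - v_{z'_i}(z',x)$, matching $U_i(z_i,z_{-i},x) - U_i(z'_i,z_{-i},x)$ directly since no congestion tax is levied on cars here. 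Once both identities are established, $\Psi$ is an exact potential for the car--truck congestion game, and the existence of a pure strategy Nash equilibrium is immediate from~\cite{monderer1996potential}.
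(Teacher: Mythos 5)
Your decomposition and the route through it are the same as the paper's: the paper splits $\Psi$ into exactly these three pieces (it writes $\Psi_3$ without the $\beta v_0$ prefactor and reinstates it at the end, which is cosmetic), uses the same four counting identities for the truck deviation, and notes that $\Psi_3$ is independent of $z$ for the car deviation. The one step you defer as ``the most delicate part'' is where you should look more closely, for two reasons. First, it is not a difference-level cancellation at all: the paper simply substitutes the subsidy into $V_j$ \emph{before} taking any differences, obtaining the pointwise identity $V_j(x_j,x_{-j},z)=\xi_j^{\mathrm{t}}(x_j,T_j^{\mathrm{t}})+v_{x_j}(z,x)+\beta v_0\,g(m_{x_j}(x))$, after which the three $\Psi_k$ contributions you computed match the three surviving terms one for one. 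Second, if you actually perform that substitution you will find
\begin{equation*}
v_{x_j}+\beta\bigl(v_0-v_{x_j}\bigr)m_{x_j}(x)+\beta v_{x_j}\,g(m_{x_j}(x))=v_{x_j}+\beta v_0\,g(m_{x_j}(x))
\end{equation*}
holds only when $g(m)=m$; for a general nondecreasing $g$ the subsidy in~\eqref{eq:subsidy} must carry $g(m_{x_j}(x))$ in place of $m_{x_j}(x)$ for the velocity-dependent terms to cancel. So the bookkeeping you flagged is trivial once done pointwise, but doing it exposes a discrepancy between the stated subsidy and the generality of $g$ that the paper's own proof silently papers over. The car half and the appeal to~\cite{monderer1996potential} are fine as you describe them.
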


\begin{proof} Let us start with trucks. Note that with the introduced policy, the utility of truck $j$ is equal
\begin{equation*}
\begin{split}
V_j(x_j,x_{-j},z)\hspace{-.03in}=\hspace{-.03in}\xi_j^{\mathrm{t}}(x_j,T_j^{\mathrm{t}})\hspace{-.03in}+\hspace{-.03in}v_{x_j}(z,x)\hspace{-.03in}+\hspace{-.03in}\beta v_0g(m_{x_j}(x)).
\end{split}
\end{equation*}
Let us define $x'=(x'_j,x_{-j})$. If $x_j=x'_j$, the result trivially holds. Therefore, without loss of generality, we consider the case where $x_j\neq x'_j$. In what follows, we examine each term in the cost function separately. First, we define $\Psi_1(x,z)=\sum_{i=1}^N\xi_i^{\mathrm{c}}(z_i,T_i^{\mathrm{c}}) +\sum_{j=1}^M \xi_j^{\mathrm{t}}(x_j,T_j^{\mathrm{t}})$. Now, it is easy to see that
$$
\Psi_1(x,z)-\Psi_1(x',z)=\xi_j^{\mathrm{t}}(x_j,T_j^{\mathrm{t}}) -\xi_j^{\mathrm{t}}(x'_j,T_j^{\mathrm{t}}).
$$
Second, we define $\Psi_2(x,z)=\sum_{r=1}^R \sum_{k=1}^{n_r(x,z)} (ak+b)$. For this term, we can show that
\begin{equation*}
\begin{split}
\Psi_2(x,z)\hspace{-.03in}-\hspace{-.03in}\Psi_2(x',z)&\hspace{-.03in}=\hspace{-.06in}\sum_{r=1}^R \hspace{-.03in}\sum_{k=1}^{n_r(x,z)}\hspace{-.03in}(ak+b) \hspace{-.03in}-\hspace{-.03in}\sum_{r=1}^R\hspace{-.03in}\sum_{k=1}^{n_r(x',z)}\hspace{-.03in}(ak+b)\\&\hspace{-.03in}=\hspace{-.03in} \sum_{k=1}^{n_{x_j}(x,z)}\hspace{-.03in}(ak+b) \hspace{-.03in}+\hspace{-.03in}\sum_{k=1}^{n_{x'_j}(x,z)}\hspace{-.03in}(ak+b)\hspace{-.03in}-\hspace{-.06in}\sum_{k=1}^{n_{x_j}(x',z)}\hspace{-.03in}(ak+b)\hspace{-.03in} -\hspace{-.03in}\sum_{k=1}^{n_{x'_j}(x',z)}\hspace{-.03in}(ak+b),
\end{split}
\end{equation*}
where the second equality holds because of the fact that $n_r(x,z)= n_r(x',z)$ for all $r\neq x_j,x'_j$. Noticing that $n_{x_j}(x',z)=n_{x_j}(x,z)-1$ and $n_{x'_j}(x,z)=n_{x'_j}(x',z)-1$, we know that
$$
\Psi_2(x,z)-\Psi_2(x',z)=(an_{x_j}(z,x)+b)-(an_{x'_j}(z,x')+b).
$$
Finally, we define $\Psi_3(x,z)=\sum_{r=1}^R\sum_{\ell=1}^{m_r(x)}g(\ell)$. In this case, we can show that
\begin{equation*}
\begin{split}
\Psi_3(x,z)-\Psi_3(x',z)=&\sum_{r=1}^R\sum_{\ell=1}^{m_r(x)}g(\ell) -\sum_{r=1}^R\sum_{\ell=1}^{m_r(x')}g(\ell)
\\=&\sum_{\ell=1}^{m_{x_j}(x)}g(\ell)+\sum_{\ell=1}^{m_{x'_j}(x)}g(\ell)-\sum_{\ell=1}^{m_{x_j}(x')}g(\ell)-\sum_{\ell=1}^{m_{x'_j}(x')}g(\ell)
\\=&g(m_{x_j}(x))-g(m_{x'_j}(x')).
\end{split}
\end{equation*}
Therefore, we get
\begin{equation*}
\begin{split}
\Psi(x,z)-\Psi(x',z)=&\Psi_1(x,z)-\Psi_1(x',z)+\Psi_2(x,z)-\Psi_2(x',z)+\beta v_0 (\Psi_3(x,z)-\Psi_3(x',z))\\=&\xi_j^{\mathrm{t}}(x_j,T_j^{\mathrm{t}})-\xi_j^{\mathrm{t}}(x'_j,T_j^{\mathrm{t}})+v_{x_j}(x,z)-v_{x'_j}(x',z)+\beta v_0(g(m_{x_j}(x))-g(m_{x'_j}(x')))\\=&V_j(x_j,x_{-j},z)-V_j(x'_j,x_{-j},z).
\end{split}
\end{equation*}
The proof for cars follows the same line of reasoning. 
\end{proof}

\begin{remark} Note that if $v_0$ is greater than the average velocity of the flow, the trucks get paid to use the road at the same time as their peers. This way the government incentivizes the trucks to form platoons. This subsidy is technically the difference between the amount of the fuel that the trucks would have saved if they formed a platoon at velocity $v_0$ instead of the actual average velocity of the traffic flow $an_r(z,x)+b$. Therefore, the trucks would benefit from traveling together even at low velocities (which is a scenario where the trucks do not increase their fuel efficiency significantly through platooning). However, if $v_0$ is smaller than the average velocity of the flow, we reduce the extra utility that the trucks would receive from traveling together (and technically $p_j^{\mathrm{t}}(x,z)$ becomes a tax rather than a subsidy). Therefore, it becomes less likely for the trucks to stick together. To emphasize the fact that we are willing to pay the trucks rather than taxing them (and hence, dealing with the first scenario), we call $p_j^{\mathrm{t}}(x,z)$ a subsidy.
\end{remark}

\section{Joint Strategy Fictitious Play} \label{sec:JSFP}
We start by briefly introducing the learning algorithm and, then, analyzing its convergence.

\subsection{Learning Algorithm}

\begin{algorithm}[t]
\renewcommand{\baselinestretch}{1}
\caption{\label{alg:1} Joint strategy fictitious play for learning a Nash equilibrium. }
\begin{algorithmic}[1]
\begin{small}
\REQUIRE $p\in(0,1)$
\ENSURE $(x^*,z^*)$
\FOR{$t=0,1,\dots$}
\FOR{$i=1,\dots,N$}
\STATE Calculate $z'_i\in\argmax_{r\in\mathcal{R}}\hat{U}_i(r;t-1)$
\IF{$U_i(z'_i,z_{-i}(t-1),x(t-1))\leq U_i(z_i(t-1),z_{-i}(t-1),x(t-1))$}
\STATE $z_i(t)\leftarrow z_i(t-1)$
\ELSE
\STATE With probability $1-p$, $z_i(t)\leftarrow z_i(t-1)$, otherwise $z_i(t)\leftarrow z'_i$
\ENDIF
\FOR{$j=1,\dots,M$}
\STATE Calculate $x'_j\in\argmax_{r\in\mathcal{R}}\hat{V}_j(r;t-1)$
\IF{ $V_j(z(t-1),x'_j,x_{-j}(t-1))\leq V_j(z(t-1),x_j(t-1),x_{-j}(t-1))$}
\STATE $x_j(t)\leftarrow x_j(t-1)$
\ELSE
\STATE With probability $1-p$, $x_j(t)\leftarrow x_j(t-1)$, otherwise $x_j(t)\leftarrow x'_j$
\ENDIF
\ENDFOR
\ENDFOR
\ENDFOR
\end{small}
\end{algorithmic}
\end{algorithm}

Assume that the agents follow the joint strategy fictitious play algorithm~\citep{marden2009joint}. To do so, the agents calculate an average utility given the history of the actions. At time step $t\in\mathbb{N}_{0}$, car~$i\in\N$ computes $\hat{U}_i(r;t)$ using the recursive equation
\begin{equation} \label{eqn:1}
\hat{U}_i(r;t)=(1-\lambda_t)\hat{U}_i(r;t-1) +\lambda_t U_i(r,z_{-i}(t),x(t)), 
\end{equation}
with the initial condition $\hat{U}_i(r;-1)=\xi_i^{\mathrm{c}}(r,T_i^{\mathrm{c}})$ for all $r\in\mathcal{R}$. In~\eqref{eqn:1}, $\lambda_t\in(0,1]$ is a forgetting factor which captures the extent that the agents forget the actions from the past. If $\lambda_t=1$, the agents are myopic (i.e., only consider the actions from the previous time step) while if $\lambda_t=1/t$, the agents value the whole history at the same level. Following the same approach, truck $j\in\M$ calculates $\hat{V}_j(r;t)$ using the recursive equation
$$
\hat{V}_j(r;t)=(1-\lambda_t)\hat{V}_j(r;t-1) +\lambda_tV_j(r,x_{-j}(t),z(t)),
$$
with $\hat{V}_j(r;-1)=\xi_j^{\mathrm{t}}(r,T_j^{\mathrm{t}})$ for all $r\in\mathcal{R}$. Algorithm~\ref{alg:1} shows the joint strategy fictitious play for the car--truck congestion game. 

\subsection{Convergence Analysis}

Noting that with appropriate taxes the introduced congestion game is a potential game, we can use the result of \citep{marden2009joint} to conclude the convergence of the learning algorithm.

\begin{theorem} Let the action profile of the agents be generated by the joint strategy fictitious play in Algorithm~\ref{alg:1}. Assume that $\lambda_t=\lambda\in(0,1)$ or $\lambda_t=1/t$ for all $t\in\mathbb{N}$. Then, this action profile almost surely converges to a pure strategy Nash equilibrium of the car--truck congestion game, if either the cars pay the congestion tax $p_i^{\mathrm{c}}(z,x)$ in~\eqref{eq:tax} or the trucks receive the platooning subsidy $p_j^{\mathrm{t}}(x,z)$ in~\eqref{eq:subsidy}.
\end{theorem}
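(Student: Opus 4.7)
The plan is to reduce this convergence claim to a direct application of the joint strategy fictitious play (JSFP) convergence result of Marden, Arslan, and Shamma~\cite{marden2009joint}, using the potential-game structure already established in Theorems~\ref{tho:1} and~\ref{lem:2}. Since the only ``new'' content here is the passage from potential-game structure to almost-sure convergence of the specific recursive update rule in Algorithm~\ref{alg:1}, the work is almost entirely about checking hypotheses.

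First, I would fix the setting. The action set $\mathcal{R}$ is finite with cardinality $R$, and there are $N+M$ players, so the car--truck congestion game is a finite strategic-form game. The utilities $U_i$ and $V_j$ take finitely many values on $\mathcal{R}^{N+M}$, hence are bounded. Second, I would invoke the potential-game structure: if the cars pay the tax $p_i^{\mathrm{c}}(z,x)$ in~\eqref{eq:tax}, then by Theorem~\ref{tho:1} the game admits the exact potential $\Phi(x,z)$; if instead the trucks receive the subsidy $p_j^{\mathrm{t}}(x,z)$ in~\eqref{eq:subsidy}, then by Theorem~\ref{lem:2} the game admits the exact potential $\Psi(x,z)$. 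In either case the unilateral deviation in $U_i$ (respectively $V_j$) equals the same deviation in the potential, so the modified car--truck congestion game is a finite exact potential game.

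Third, I would verify that the recursive averages $\hat U_i(r;t)$ and $\hat V_j(r;t)$ in the algorithm coincide with the predicted utility of the best response to the empirical joint distribution of opponents' play used in~\cite{marden2009joint} under the two specified choices of $\lambda_t$: the choice $\lambda_t = 1/t$ yields the standard time-averaged empirical JSFP of~\cite{marden2009joint}, while $\lambda_t = \lambda \in (0,1)$ corresponds to the geometrically discounted (``fading memory'') variant also treated there. The best-response step in lines~3 and~10 of Algorithm~\ref{alg:1}, together with the strict inertia condition (switch only if strictly better, and only with probability $p\in(0,1)$), matches the inertial JSFP hypotheses. Hence the assumptions of the JSFP convergence theorem for finite potential games in~\cite{marden2009joint} are met.

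Fourth, the conclusion is immediate: by that theorem, the joint action profile $(z(t),x(t))$ almost surely converges in finite time to an action profile that is a pure strategy Nash equilibrium of the potential game, which by the equivalence of best responses in $U_i,V_j$ and $\Phi$ (or $\Psi$) is a pure strategy Nash equilibrium of the car--truck congestion game. The only mildly delicate point, and the one I would write out most carefully, is to make the correspondence between the two forgetting-factor rules and the two variants of JSFP analyzed in~\cite{marden2009joint} explicit, so that one truly is applying the cited theorem rather than a close cousin; everything else is bookkeeping.
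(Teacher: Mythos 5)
Your proposal is correct and follows exactly the route the paper takes: it cites Theorems~2.1 and~3.1 of the JSFP reference (for the $\lambda_t=1/t$ and fading-memory cases, respectively) combined with Theorems~\ref{tho:1} and~\ref{lem:2} to get the potential-game structure. Your version merely spells out the hypothesis-checking that the paper leaves implicit.
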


\begin{proof} The proof is a consequence of combining Theorems~2.1 and~3.1 in~\citep{marden2009joint} with Theorems~\ref{tho:1} and~\ref{lem:2}.
\end{proof}

Note that the joint strategy fictitious play might be restrictive in some aspects. For instance, all the agents must have access to all the individual decisions taken by the other agents to calculate the average cost function. In the next section, we adapt the average strategy fictitious play introduced in~\citep{gameroad} as an alternative. This learning algorithm requires instead a central node to broadcast the congestion prediction (i.e., an average of all the players actions) for all time intervals per day.

\section{Average Strategy Fictitious Play} \label{sec:ASFP}
First, we introduce the average strategy fictitious play and study its convergence by extending parts of the proofs in~\citep{gameroad}.

\subsection{Learning Algorithm}
Before introducing the learning algorithm, we have to make the following standing assumptions: 
\begin{assumption} The congestion tax policies satisfy
\begin{itemize}
\item $p_i^{\mathrm{c}}(z,x)$, $i\in\N$, is only a function of $n_{z_i}(x,z),m_{z_i}(x)$;
\item $p_j^{\mathrm{t}}(x,z)$, $j\in\M$, is only a function of $n_{x_j}(x,z),m_{x_j}(x)$.
\end{itemize} 
\end{assumption}

This assumption means that the congestion tax can only be function of the traffic flow rather than the individual actions of the agents. The congestion taxing policy that we introduced in the previous section satisfies this assumption. To emphasize this fact, from now on, we write $p_i^{\mathrm{c}}(n_{z_i}(x,z),m_{z_i}(x))$ and $p_j^{\mathrm{t}}(n_{x_j}(x,z),m_{x_j}(x))$ with some abuse of notation.

Now, we can introduce the average strategy fictitious play. To initialize the algorithm, we let the agents pick an arbitrary action from the set $\mathcal{R}$ at the first time step. We assume that there exists a central node\footnote{This central node is assumed to be a not-for-profit organization. Therefore, it is not trying to optimize its income or loss (i.e., the summation of the received taxes or the distributed subsidies) and, hence, it would not strategically deviate from the intended algorithm. Certainly, introducing a mechanism with profitable organizations as a central node can be a viable avenue for future research (to attract the private sector for implementing this part). } that can observe the traffic flow at each time interval. This central node uses the following recursive update laws to calculate the average number of the cars and trucks in each time interval
\begin{align*}
\bar{n}^{\mathrm{c}}_r(t)&=(1-\lambda)\bar{n}^{\mathrm{c}}_r(t-1)+ \lambda \sum_{\ell=1}^N \mathbf{1}_{\{z_\ell(t)=r\}},
\\[-.5em]
\bar{n}^{\mathrm{t}}_r(t)&=(1-\lambda)\bar{n}^{\mathrm{t}}_r(t-1)+ \lambda \sum_{\ell=1}^M \mathbf{1}_{\{x_\ell(t)=r\}},
\end{align*}
with $\bar{n}^{\mathrm{c}}_r(0)=\sum_{\ell=1}^N \mathbf{1}_{\{z_\ell(0)=r\}}$ and $\bar{n}^{\mathrm{t}}_r(0)=\sum_{\ell=1}^M \mathbf{1}_{\{x_\ell(0)=r\}}$ for all $r\in\mathcal{R}$. The superscripts $\mathrm{c}$ and $\mathrm{t}$ show that the aforementioned property is related to the cars or the trucks, respectively. In these recursive update laws, we should choose the forgetting factor $\lambda\in(0,1)$ to capture the extent with which we value the congestion information from the past. We can think of the numbers $\bar{n}^{\mathrm{c}}_r(t)$ and $\bar{n}^{\mathrm{t}}_r(t)$ as the forecasts that the central node (e.g., the department of transportation, the radio station, etc) announces on a day-to-day basis about the traffic flow for each time interval of the day. These values have a memory to remember the congestion in earlier days and get updated based on the actual observation of the traffic flow every midnight.

Additionally, car $i\in\N$ and truck $j\in\M$ keep track of the average number of times that they have chosen  $r\in\mathcal{R}$ following the recursive update laws
\begin{align*}
\bar{w}^{\mathrm{c}}_{r,i}(t)&=(1-\lambda)\bar{w}^{\mathrm{c}}_{r,i} (t-1)+ \lambda\mathbf{1}_{\{z_i(t)=r\}},
\\[-.5em]
\bar{w}^{\mathrm{t}}_{r,j}(t)&=(1-\lambda)\bar{w}^{\mathrm{t}}_{r,j} (t-1)+ \lambda\mathbf{1}_{\{x_j(t)=r\}},
\end{align*}
with $\bar{w}^{\mathrm{c}}_{r,i}(0)=\mathbf{1}_{\{z_i(0)=r\}}$ and $\bar{w}^{\mathrm{t}}_{r,j}(0)=\mathbf{1}_{\{x_j(0)=r\}}$ for all $r\in\mathcal{R}$. Finally, for all $i\in\N$ and $j\in\M$, we define the new ``average'' cost functions in
\begin{subequations} \label{eqn:proof:long:2-3}
\begin{align} 
\tilde{V}_j(r;t)=&[a(\bar{n}^{\mathrm{c}}_{r}(t) +\bar{n}^{\mathrm{t}}_{r}(t)-\bar{w}^{\mathrm{t}}_{r,j}(t)+1)+b]  \nonumber \\ \nonumber &+\beta[a(\bar{n}^{\mathrm{c}}_{r}(t) +\bar{n}^{\mathrm{t}}_{r}(t)-\bar{w}^{\mathrm{t}}_{r,j}(t)+1) +b] g(\bar{n}^{\mathrm{t}}_{r}(t) -\bar{w}^{\mathrm{t}}_{r,j}(t)+1) \\&+\xi_j^{\mathrm{t}}(r,T_j^{\mathrm{t}}) + p_j^{\mathrm{t}}(\bar{n}^{\mathrm{c}}_{r}(t) +\bar{n}^{\mathrm{t}}_{r}(t)-\bar{w}^{\mathrm{t}}_{r,j}(t) +1,\bar{n}^{\mathrm{t}}_{r}(t) -\bar{w}^{\mathrm{t}}_{r,j}(t)+1), \label{eqn:proof:long:3} \\
\tilde{U}_i(r;t)=&\xi_i^{\mathrm{c}}(r,T_i^{\mathrm{c}}) +[a(\bar{n}^{\mathrm{c}}_{r}(t) +\bar{n}^{\mathrm{t}}_{r}(t) -\bar{w}^{\mathrm{c}}_{r,i}(t)+1)+b] \nonumber
\\&+p_i^\mathrm{c}(\bar{n}^{\mathrm{c}}_{r}(t) +\bar{n}^{\mathrm{t}}_{r}(t)-\bar{w}^{\mathrm{c}}_{r,i}(t)+1, \bar{n}^{\mathrm{t}}_{r}(t)). \label{eqn:proof:long:2}
\end{align}
\end{subequations}
Now, if we follow Algorithm~\ref{alg:2}, we expect to converge to a Nash equilibrium.

\begin{algorithm}[t]
\renewcommand{\baselinestretch}{1}
\begin{small}
\caption{\label{alg:2} Average strategy fictitious play for learning a Nash equilibrium. }
\begin{algorithmic}[1]
\REQUIRE $p\in(0,1)$
\ENSURE $(x^*,z^*)$
\FOR{$t=1,2,\dots$}
\FOR{$i=1,\dots,N$}
\STATE Calculate $z'_i\in\argmax_{r\in\mathcal{R}}\tilde{U}_i(r;t-1)$
\IF{ $U_i(z'_i,z_{-i}(t-1),x(t-1))\leq U_i(z_i(t-1),z_{-i}(t-1),x(t-1))$}
\STATE $z_i(t)\leftarrow z_i(t-1)$
\ELSE
\STATE With probability $1-p$, $z_i(t)\leftarrow z_i(t-1)$, otherwise $z_i(t)\leftarrow z'_i$
\ENDIF
\FOR{$j=1,\dots,M$}
\STATE Calculate $x'_j\in\argmax_{r\in\mathcal{R}}\tilde{V}_j(r;t-1)$
\IF{ $V_j(z(t-1),x'_j,x_{-j}(t-1))\leq V_j(z(t-1),x_j(t-1),x_{-j}(t-1))$}
\STATE $x_j(t)\leftarrow x_j(t-1)$
\ELSE
\STATE With probability $1-p$, $x_j(t)\leftarrow x_j(t-1)$, otherwise $x_j(t)\leftarrow x'_j$
\ENDIF
\ENDFOR
\ENDFOR 
\ENDFOR
\end{algorithmic}
\end{small}
\end{algorithm}

\subsection{Convergence Analysis}
First, we need to prove an intermediate lemma which shows that if Algorithm~\ref{alg:2} reaches a Nash equilibrium, it stays there forever.

\begin{lemma} \label{lem:3} Let each truck $j\in\M$ receive the subsidy
$$
p_j^{\mathrm{t}}(x,z)=\beta (v_0-(an_{x_j}(z,x)+b))m_{x_j}(x),
$$
for a given $v_0\in\mathbb{R}$. If $x(t)$ and $z(t)$, generated by Algorithm~\ref{alg:2}, is a pure strategy Nash equilibrium, and  $z_i(t)\in\argmax_{r\in \mathcal{R}}\tilde{U}_i(r;t-1)$ for all $i\in\N$ and $x_j(t)\in\argmax_{r\in\mathcal{R}}\tilde{V}_j(r;t-1)$ for all $j\in\M$, then $x(t')=x(t)$ and $z(t')=z(t)$ for all $t'\geq t$.
\end{lemma}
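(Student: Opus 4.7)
The plan is to prove the claim by induction on $t' \geq t$. The base case $t' = t$ holds by assumption. For the inductive step, suppose $(x(t'), z(t')) = (x(t), z(t))$ and trace one round of Algorithm~\ref{alg:2} producing $(x(t'+1), z(t'+1))$. For each car $i \in \N$, the algorithm first picks a candidate $z'_i \in \argmax_{r\in \mathcal{R}} \tilde{U}_i(r; t')$ and then tests the true-utility improvement condition
\[
U_i(z'_i, z_{-i}(t'), x(t')) \leq U_i(z_i(t'), z_{-i}(t'), x(t'))
\]
before possibly switching. Since $(x(t'), z(t'))$ coincides with the Nash equilibrium profile by the inductive hypothesis, no $z'_i \in \mathcal{R}$ can strictly improve car $i$'s utility, so the inequality holds automatically, the IF branch fires, and $z_i(t'+1) = z_i(t') = z_i(t)$. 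The identical argument applied to each truck $j \in \M$ via the check $V_j(z(t'), x'_j, x_{-j}(t')) \leq V_j(z(t'), x_j(t'), x_{-j}(t'))$ yields $x_j(t'+1) = x_j(t') = x_j(t)$. Hence $(x(t'+1), z(t'+1)) = (x(t), z(t))$ and the induction closes.

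The proof is essentially structural: once the true utilities admit no profitable unilateral deviation, the IF branches of Algorithm~\ref{alg:2} always retain the current action, \emph{irrespective} of what the averaged utilities $\tilde{U}_i(\cdot;t')$ and $\tilde{V}_j(\cdot;t')$ prescribe. The hypothesis $z_i(t) \in \argmax_r \tilde{U}_i(r; t-1)$ (and its truck counterpart) is not actually consumed by the induction; it is a consistency condition at time $t$ confirming that $(x(t), z(t))$ could indeed have been produced by Algorithm~\ref{alg:2}, and it will be useful later when tying the lemma to a global convergence argument. Note also that Theorem~\ref{lem:2} is invoked only implicitly: the subsidy $p_j^{\mathrm{t}}$ shapes the true utilities $V_j$, so that the fixed profile is indeed a Nash equilibrium of the taxed/subsidized game.

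I expect no genuine obstacle. The only point requiring a brief comment is that the running averages $\bar{n}_r^{\mathrm{c}}(\cdot)$, $\bar{n}_r^{\mathrm{t}}(\cdot)$, $\bar{w}_{r,i}^{\mathrm{c}}(\cdot)$ and $\bar{w}_{r,j}^{\mathrm{t}}(\cdot)$ continue to evolve even after the actions freeze, so the argmax candidates $z'_i$ and $x'_j$ might drift from one round to the next; however, because the switching rule is gated by the \emph{true} utility rather than the averaged one, this drift is benign at a Nash equilibrium and the profile is absorbed for all $t' \geq t$.
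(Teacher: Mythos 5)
Your proof is correct, but it takes a genuinely different --- and more elementary --- route than the paper's. You observe that the switching rule in Algorithm~\ref{alg:2} is gated by the \emph{true} utilities $U_i$ and $V_j$: whatever candidate $z'_i\in\argmax_{r\in\mathcal{R}}\tilde{U}_i(r;t')$ the forecast produces, the agent only switches if that candidate strictly improves $U_i(\cdot,z_{-i}(t'),x(t'))$, which is impossible at a Nash equilibrium; hence the profile is absorbed, the argmax hypothesis is never consumed, and the specific form of the subsidy plays no role beyond defining the game whose equilibrium is under consideration. The paper argues differently: it uses the particular subsidy $p_j^{\mathrm{t}}$ together with the update laws~\eqref{eqn:update:ASFP:1-2} to show that the forecast utility obeys the recursion $\tilde{V}_j(r;t)=(1-\lambda)\tilde{V}_j(r;t-1)+\lambda V_j(r,x_{-j}(t),z(t))$ --- i.e., under this policy the ASFP forecast coincides with an exponentially weighted average of realized utilities, exactly as in joint strategy fictitious play --- and then combines the Nash property with the hypothesis $x_j(t)\in\argmax_{r\in\mathcal{R}}\tilde{V}_j(r;t-1)$ to conclude $x_j(t)\in\argmax_{r\in\mathcal{R}}\tilde{V}_j(r;t)$, so that the maximizer property itself propagates forward in time. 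What each approach buys: yours is shorter, requires no computation, and works for any taxing policy satisfying the standing assumption, since only the true-utility gate matters; the paper's computation is what actually explains the lemma's hypotheses (the stated subsidy and the argmax conditions), and the recursion it establishes is the structural fact that lets the subsequent convergence theorem reuse the proof of Theorem~4.1 of~\citep{gameroad}, so the extra work is not wasted even though it is not strictly needed for the absorption claim itself.
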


\begin{proof} The proof of this lemma follows the same line of reasoning as in the proof of Proposition~4.2 in~\citep{gameroad}. Here, we only prove the results for the trucks as the proof for the cars is technically the same. First, note that for all $r\in\mathcal{R}$, we get
\begin{subequations} \label{eqn:update:ASFP:1-2}
\begin{equation} \label{eqn:update:ASFP:1}
\begin{split}
\bar{n}^{\mathrm{c}}_{r}(t) +\bar{n}^{\mathrm{t}}_{r}(t)-\bar{w}^{\mathrm{t}}_{r}(t)\hspace{-.03in}=\hspace{-.03in}&\;(1\hspace{-.03in}-\hspace{-.03in}\lambda)\bar{n}^{\mathrm{c}}_r(t-1)\hspace{-.03in}+\hspace{-.03in}\lambda \sum_{\ell=1}^N \hspace{-.03in}\mathbf{1}_{\{z_\ell(t)=r\}}\hspace{-.03in}+\hspace{-.03in}(1\hspace{-.03in}-\hspace{-.03in}\lambda)\bar{n}^{\mathrm{t}}_r(t-1)\hspace{-.03in}+\hspace{-.03in}\lambda \sum_{\ell=1}^M \hspace{-.03in}\mathbf{1}_{\{x_\ell(t)=r\}} \\&-(1-\lambda)\bar{w}^{\mathrm{t}}_{r,j}(t-1)- \lambda\mathbf{1}_{\{x_j(t)=r\}}
\\\hspace{-.03in}=\hspace{-.03in}&\;(1\hspace{-.03in}-\hspace{-.03in}\lambda)(\bar{n}^{\mathrm{c}}_{r}(t-1)\hspace{-.03in}+\hspace{-.03in}\bar{n}^{\mathrm{t}}_{r}(t-1)\hspace{-.03in}-\hspace{-.03in}\bar{w}^{\mathrm{t}}_{r}(t-1))\hspace{-.03in}+\hspace{-.03in}\lambda (n_r(x(t),z(t))\hspace{-.03in}-\hspace{-.03in}\mathbf{1}_{\{x_j(t)=r\}}),
\end{split}
\end{equation}
\begin{equation} \label{eqn:update:ASFP:2}
\begin{split}
\bar{n}^{\mathrm{t}}_{r}(t) -\bar{w}^{\mathrm{t}}_{r,j}(t)=&\;(1-\lambda) \bar{n}^{\mathrm{t}}_r(t-1)+ \lambda\sum_{\ell=1}^M \mathbf{1}_{\{x_\ell(t)=r\}} -(1-\lambda)\bar{w}^{\mathrm{t}}_{r,j}(t-1)- \lambda\mathbf{1}_{\{x_j(t)=r\}}
\\=&\;(1-\lambda)(\bar{n}^{\mathrm{t}}_{r}(t-1) -\bar{w}^{\mathrm{t}}_{r,j}(t-1))+\lambda(m_r(x(t)) -\mathbf{1}_{\{x_j(t)=r\}}).
\end{split}
\end{equation}
\end{subequations}
Now, using these update laws and the proposed subsidy policy in~\eqref{eq:subsidy}, we get
\begin{equation*}
\begin{split}
\tilde{V}_j(r;t)=&\;\xi_j^{\mathrm{t}}(r,T_j^{\mathrm{t}}) +a(\bar{n}^{\mathrm{c}}_{r}(t) +\bar{n}^{\mathrm{t}}_{r}(t)- \bar{w}^{\mathrm{t}}_{r}(t)+1)+b+\beta v_0(\bar{n}^{\mathrm{t}}_{r}(t)-\bar{w}^{\mathrm{t}}_{r,j}(t)+1)\\= &\;\xi_j^{\mathrm{t}}(r,T_j^{\mathrm{t}}) + a(1-\lambda)(\bar{n}^{\mathrm{c}}_{r}(t-1) +\bar{n}^{\mathrm{t}}_{r}(t-1)- \bar{w}^{\mathrm{t}}_{r}(t-1))\hspace{-.03in}+\hspace{-.03in}a(\lambda (n_{r}(x(t),z(t))\hspace{-.03in}-\hspace{-.03in}\mathbf{1}_{\{x_j(t)=r\}})\hspace{-.03in}+\hspace{-.03in}1)\\&+b+\beta v_0(1-\lambda)(\bar{n}^{\mathrm{t}}_{r}(t-1) -\bar{w}^{\mathrm{t}}_{r,j}(t-1))+\beta v_0(\lambda(m_{r}(x(t))-\mathbf{1}_{\{x_j(t)=r\}})+1)
\\=&\;(1-\lambda)\tilde{V}_j(r;t-1)+\lambda V_j(r,x_{-j}(t),z(t)).
\end{split}
\end{equation*}
Therefore, we can prove that
\begin{equation*}
\begin{split}
\tilde{V}_j(x_j(t);\hspace{-.02in}t)&\hspace{-.04in}=\hspace{-.04in} (1\hspace{-.04in}-\hspace{-.04in}\lambda) \tilde{V}_j(x_j(t);t-1) \hspace{-.04in}+\hspace{-.04in}\lambda V_j(x_j(t),x_{\hspace{-.01in}-\hspace{-.01in}j}(t),\hspace{-.02in}z(t))
\\ &\hspace{-.04in}\geq\hspace{-.04in} (1\hspace{-.04in}-\hspace{-.04in}\lambda)\tilde{V}_j(r;t-1) \hspace{-.04in}+\hspace{-.04in}\lambda V_j(r,x_{-j}(t),z(t))
\\&\hspace{-.04in}=\hspace{-.04in}\tilde{V}_j(r;t)
\end{split}
\end{equation*}
for any $r\in\mathcal{R}$, where the inequality is direct consequence of the fact that the pair $x(t)$ and $z(t)$ is a pure strategy Nash equilibrium and $x_j(t)\in\argmax_{r\in\mathcal{R}}\tilde{V}_j(r;t-1)$ for all $j\in\M$. Thus, $x_j(t)\in\argmax_{r\in\mathcal{R}}\tilde{V}_j(r;t)$ and as a result, we get $x_j(t+1)=x_j(t)$ (following Algorithm~\ref{alg:2}). Now, using a simple mathematical induction, we can show $x_j(t+k)=x_j(t)$ for all $k\in\mathbb{N}$.
\end{proof}

\begin{theorem} Let the action profile of the agents be generated by the average strategy fictitious play in Algorithm~\ref{alg:2}. Then, this action profile almost surely converges to a pure strategy Nash equilibrium of the car--truck congestion game, if the trucks receive the platooning subsidy $p_j^{\mathrm{t}}(x,z)$ in~\eqref{eq:subsidy}.
\end{theorem}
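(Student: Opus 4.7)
The plan is to combine three ingredients: (i) the exact potential structure established in Theorem~\ref{lem:2} under the platooning subsidy, (ii) the absorbing property of Nash equilibria from Lemma~\ref{lem:3}, and (iii) a finite-state, positive-probability argument in the spirit of the proof of Theorem~4.3 in~\citep{gameroad}. The goal is to show that the Markov chain $(x(t),z(t))$ is eventually trapped at a pure strategy Nash equilibrium with probability one.

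The first step is a monotonicity observation. In Algorithm~\ref{alg:2}, the acceptance test compares the \emph{true} utility values $U_i$ and $V_j$ at the candidate and current actions, not the surrogates $\tilde{U}_i$ and $\tilde{V}_j$. Hence any realized switch strictly improves the deviating agent's true utility, and by the exact-potential identity of Theorem~\ref{lem:2} the potential $\Psi$ strictly increases by the same amount. Because the joint action profile lives in the finite set $\mathcal{R}^{N+M}$, $\Psi$ takes only finitely many values, so only finitely many switches can occur on any sample path; from some random time onward, the profile is constant at some $(x^*,z^*)$.

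Next, I would show that this limiting profile must be a Nash equilibrium. Using the surrogate-update identity $\tilde{V}_j(r;t)=(1-\lambda)\tilde{V}_j(r;t-1)+\lambda V_j(r,x_{-j}(t),z(t))$ derived inside the proof of Lemma~\ref{lem:3} (and its counterpart for cars), one sees that once the profile has been frozen at $(x^*,z^*)$ from time $\tau$ onward, the vectors $\tilde{V}_j(\,\cdot\,;t)$ and $\tilde{U}_i(\,\cdot\,;t)$ converge geometrically to the instantaneous best-response scores $V_j(\,\cdot\,,x^*_{-j},z^*)$ and $U_i(\,\cdot\,,z^*_{-i},x^*)$. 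Therefore, for all sufficiently large $t$, the candidates $z'_i$ and $x'_j$ selected in Algorithm~\ref{alg:2} coincide with the true best responses to $(x^*,z^*)$. If $(x^*,z^*)$ were not a Nash equilibrium, some agent would have a strictly utility-improving best response, the corresponding true-utility inequality in Algorithm~\ref{alg:2} would be strict, and with probability $p>0$ the switch would be accepted, contradicting the stationarity of the profile. Hence $(x^*,z^*)$ is a Nash equilibrium, and by Lemma~\ref{lem:3} it remains one forever.

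The final ingredient converts the ``eventually frozen'' statement into almost sure convergence. Because each switching attempt has probability bounded below by $p$, and each surrogate-alignment event has probability uniformly bounded below once the process has been stationary for sufficiently many steps, a standard Borel--Cantelli argument on the complementary events rules out infinite non-equilibrium stationarity. The main obstacle I anticipate is the alignment step: quantifying how many frozen periods are required for $\tilde{U}_i$ and $\tilde{V}_j$ to correctly identify the true best response to the frozen profile, and coping with the nested for-loop structure of Algorithm~\ref{alg:2} so that the strict monotonicity of $\Psi$ survives the sequential intra-step updates as well as the inter-step transitions.
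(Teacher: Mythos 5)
Your high-level plan matches the paper's, which simply invokes Theorem~\ref{lem:2}, Lemma~\ref{lem:3}, and the convergence argument of Theorem~4.1 in~\cite{gameroad}. But your first step contains a genuine gap. You claim that every realized switch strictly increases the potential $\Psi$ by the deviator's utility gain and conclude that only finitely many switches can occur on any sample path. The exact-potential identity of Theorem~\ref{lem:2} applies only to \emph{unilateral} deviations, whereas in Algorithm~\ref{alg:2} every acceptance test is evaluated against the time-$(t-1)$ profile $(z_{-i}(t-1),x(t-1))$: several agents can switch within the same iteration, and then $\Psi(x(t),z(t))-\Psi(x(t-1),z(t-1))$ is not the sum of the individual improvements and need not be positive. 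So the deterministic ``$\Psi$ is nondecreasing, hence the profile freezes'' argument fails, and sample paths with infinitely many simultaneous switches are not excluded. You flag exactly this issue at the end as an ``obstacle,'' but it is not something to be finessed---monotonicity genuinely does not survive simultaneous updates, and the proof must be restructured around that fact.

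The repair is the standard inertia argument from~\cite{marden2009joint,gameroad}: with probability at least $(1-p)^{N+M-1}p>0$ exactly one agent with a profitable candidate moves while all others stay, and that single unilateral move strictly increases $\Psi$; holding the profile fixed for enough consecutive steps (again a positive-probability event) aligns the surrogates $\tilde U_i,\tilde V_j$ with the true utilities exactly as in your second step, so from any state there is a finite path of uniformly positive probability into a pure strategy Nash equilibrium, which is absorbing by Lemma~\ref{lem:3}. A Borel--Cantelli argument on the finite state space then yields almost sure convergence. Your alignment and absorption steps are sound and are the right ingredients; the logical order just has to be ``positive-probability path into an absorbing set'' rather than ``deterministic freezing followed by identification of the limit.''
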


\begin{proof} The proof follows from using Theorem~\ref{lem:2} and Lemma~\ref{lem:3} in the proof of Theorem~4.1 in~\citep{gameroad}.
\end{proof}

\section{Numerical Example} \label{sec:numericalexample}
Let us assume that $N=10000$ cars and $M=100$ trucks are using the segment of the highway illustrated in Figure~\ref{figuremap} from 7:00am to 9:00am on a daily basis. We divide the time horizon into eight equal non-overlapping intervals. Hence, we fix the action set as $\mathcal{R}=\{1,\dots,8\}$, where each number represents an interval of $15\,\mbox{min}$. Let $T_i^{\mathrm{c}}$, $i\in\N$, be randomly chosen from the set $\mathcal{R}$ using the discrete distribution
$$
\mathbb{P}\{T_i^{\mathrm{c}}=n\}=\left\{\begin{array}{ll}1/6, & n=2,4, \\ 1/4, & n=3, \\ 1/12, & \mbox{otherwise}. \end{array}\right.
$$
Let us also use a similar probability distribution to extract $T_j^{\mathrm{t}}$, $j\in\M$. Hence, we consider the case where the drivers statistically prefer to use the road at $r=3$ which corresponds to 7:30am to 7:45am. Let $\alpha_i^{\mathrm{c}}$, $i\in\N$, and $\alpha_j^{\mathrm{t}}$, $j\in\M$, be randomly generated following a uniform distribution within the interval $[-7.5,-2.5]$. Finally, let $a=-0.0110$ and $b=84.9696$ as discussed in Section~\ref{sec:problemsetup}.

\begin{figure}[!t]
\centering
\includegraphics[width=0.5\linewidth]{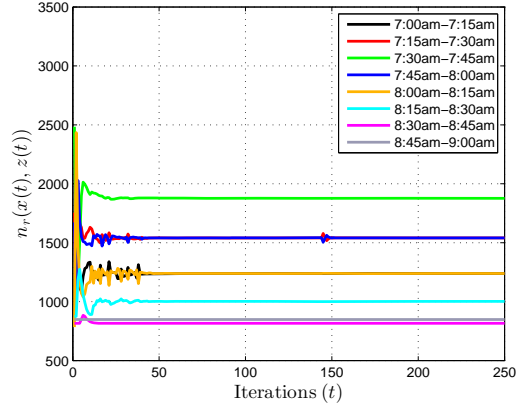}
\caption{\label{figure1} $n_{r}(x(t),z(t))$, $r\in\mathcal{R}$, versus the iteration number for $\beta=10^{-3}$ when using the joint strategy fictitious play in Algorithm~\ref{alg:1} with $p=0.4$ and $\lambda_t=3\times 10^{-2}$ for all $t\in\mathbb{N}_{0}$. }
\end{figure}

\begin{figure}[!t]
\centering
\includegraphics[width=0.5\linewidth]{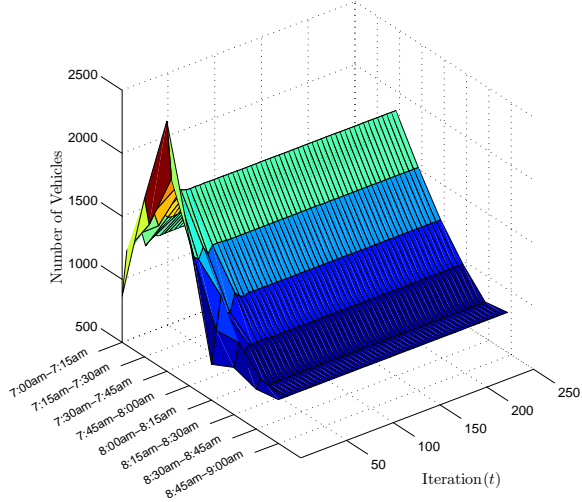}
\caption{\label{3dsurf} Number of the vehicles in each time interval for $\beta=10^{-3}$ when using the joint strategy fictitious play in Algorithm~\ref{alg:1} with $p=0.4$ and $\lambda_t=3\times 10^{-2}$ for all $t\in\mathbb{N}_{0}$. }
\end{figure}

\begin{figure}[!t]
\centering
\includegraphics[width=0.5\linewidth]{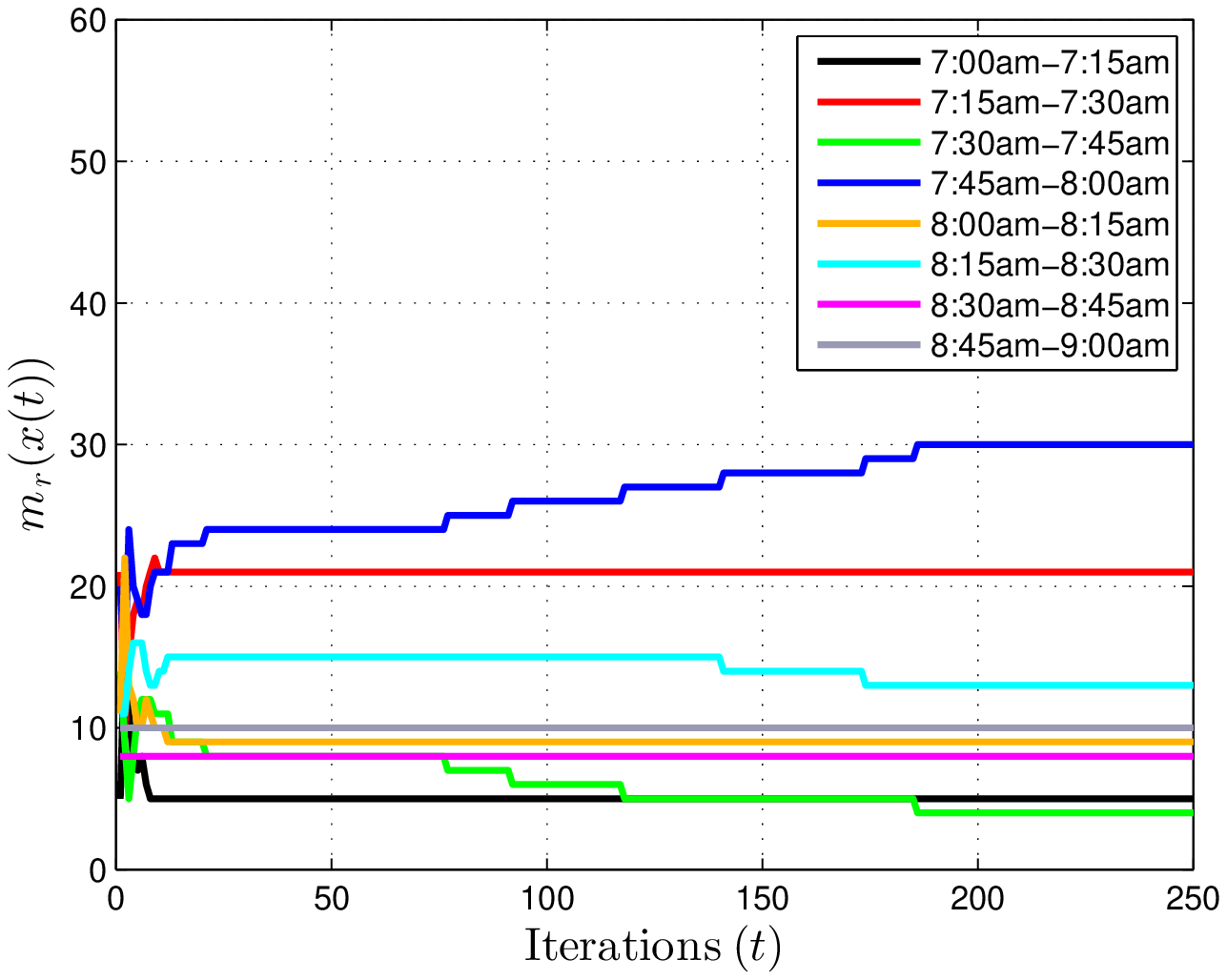}
\caption{\label{figure2} $m_{r}(x(t))$, $r\in\mathcal{R}$, versus the iteration number for $\beta=10^{-3}$ when using the joint strategy fictitious play in Algorithm~\ref{alg:1} with $p=0.4$ and $\lambda_t=3\times 10^{-2}$ for all $t\in\mathbb{N}_{0}$. }
\end{figure}

\begin{figure}[!t]
\centering
\includegraphics[width=0.5\linewidth]{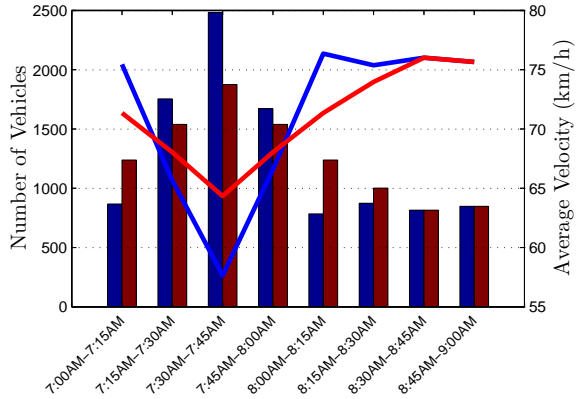}
\caption{\label{figure:averagevelocity} Number of the vehicles and the average velocity of the traffic flow in each time interval for the case where the drivers neglect the congestion in their decision making (blue) and for the learned pure strategy Nash equilibrium (red). }
\end{figure}

\subsection{Learning Algorithm Performance}
In this subsection, we start by simulating the joint strategy fictitious play in Algorithm~\ref{alg:1}. Let us fix $\beta=10^{-3}$, $p=0.4$, and $\lambda_t=3\times 10^{-2}$ for all $t\in\mathbb{N}_{0}$. Figure~\ref{figure1} illustrates the number of the vehicles (both cars and trucks) that are using a specific time interval to commute $n_{r}(x(t),z(t))$, $r\in\mathcal{R}$, as a function of the iteration number. As can be seen in this figure, the learning algorithm converges to a pure strategy Nash equilibrium in this example relatively fast\footnote{Recall that there are $|\mathcal{R}|^{M+N}$ possible action combinations in a car--truck congestion game. Therefore, in this example,  we have $8^{10100}\simeq 10^{9100}$ possible action combinations. To put this number into perspective, recall that there are around $10^{80}$ atoms in the visible universe. }. Figure~\ref{3dsurf} shows the evolution of the traffic distribution. Figure~\ref{figure2} shows the number of trucks $m_{r}(x(t))$, $r\in\mathcal{R}$, that are using the road on various time intervals. For instance, at the learned Nash equilibrium, thirty trucks use the time interval 7:45am to 8:00am while at the same time, most of them avoid using 7:15am to 7:30am because it is highly congested (and they would not save much fuel if they commute at this time).

\subsection{Nash Equilibrium Efficiency}
Figure~\ref{figure:averagevelocity} shows the number of the vehicles in each time interval and the corresponding average velocity in that time interval. The blue color denotes the case where the drivers do not consider the congestion in their decision making; i.e., they commute whenever pleases them, $z_i=T_i^\mathrm{c}$ for all $i\in\N$ and $x_j=T_j^\mathrm{t}$ for all $j\in\M$. The red color denotes the case where the drivers implement the pure strategy Nash equilibrium that they have learned using Algorithm~\ref{alg:1}. As we can see in this figure, the proposed congestion game reduces the average commuting time (increases the average velocity). Following~\cite{Vockingnisan2007algorithmic}, we can define the social cost
\begin{equation*}
\begin{split}
S(x,z)&=\min_{r\in\mathcal{R}} v_r(z,x)\\&=\min_{r\in\mathcal{R}} an_r(x,z)+b
\\&= a(\max_{r\in\mathcal{R}}n_r(x,z))+b,
\end{split}
\end{equation*}
where the last equality holds because of the fact that $a<0$. This social cost is the worst-case average velocity of the traffic flow\footnote{This cost function is an example of a Rawlsian social cost function (i.e., the worst-case cost function of the players). Another possible choice of social cost function is a utilitarian social cost function (i.e., summation of the individual cost functions of all the players); see~\cite[p.\,413]{2006general} for more information regarding the difference between these two categories of social cost functions.}. Another definition of social cost could be the total fuel consumption or the overall carbon emission. In a utopia, the government should be able to implement a global solution of the optimization problem
$$
(x^\bullet,z^\bullet)\in\argmax_{(z,x)\in\mathcal{R}^N\times \mathcal{R}^M}S(x,z),
$$
to achieve the lowest congestion at all time intervals. However, this solution cannot be implemented in a society with strategic (selfish) agents since they have no incentive for following a socially optimal decision $(x^\bullet,z^\bullet)$. Note that since $a<0$, we have
\begin{equation*}
\begin{split}
(x^\bullet,z^\bullet)&\in\argmax_{(z,x)\in\mathcal{R}^N\times \mathcal{R}^M} \min_{r\in\mathcal{R}} \; an_r(x,z)+b\\&\in\argmin_{(z,x)\in\mathcal{R}^N\times \mathcal{R}^M}\max_{r\in\mathcal{R}} \; n_r(x,z),
\end{split}
\end{equation*}
and as a result, we get
\begin{align*}
S(x^\bullet,z^\bullet)&=a\left\lceil \frac{N+M}{|\mathcal{R}|} \right\rceil+b
\\&=71.0766\,\mathrm{km/h}.
\end{align*}
Therefore, we have
$$
\frac{S(x^\bullet,z^\bullet)}{S(x^*,z^*)}=1.1048,
$$
which shows that the acquired pure strategy Nash equilibrium $(x^*,z^*)$ is not efficient with respect to the introduced welfare function\footnote{It is worth mentioning that if we choose the potential function $\Phi$ in Theorem~\ref{tho:1} as the social welfare function, the learned Nash equilibrium is indeed efficient since Algorith,~\ref{alg:1} results in  a local maximizer of this potential function. However, such a choice does not have any practical implications. }. However, it is somewhat better than the case where the drivers do not consider the congestion in their decision making (i.e. they travel whenever pleases them) as
$$
\frac{S(x^\bullet,z^\bullet)}{S(\{T_j^\mathrm{t}\}_{j=1}^M, \{T_i^\mathrm{c}\}_{i=1}^N)}=1.2330.
$$

\subsection{Robustness of the Learning Algorithm}
Let us now consider the case where on the fiftieth day of learning (i.e., iteration $t=50$) an unexpected behavior (e.g., a traffic accident) significantly decreases the average velocity of the traffic flow during 7:15am and 8:00am (i.e., for $r=2,3,4$). To reflect this matter in the simulations, we assume that on the fiftieth iteration, the average velocity for $r=2,3,4$ is given by $(an_r(x(t),z(t))+b)/10$. Figure~\ref{figure:Accidnet} illustrates the number of vehicles that are using a specific time interval to commute $n_{r}(x(t),z(t))$, $r\in\mathcal{R}$, as a function of the iteration numbers. Note that there is a sudden drop in the number of the vehicles that are using the time intervals corresponding to $r=2,3,4$ for a while (around twenty iterations) after the accident. However, the learning process recovers the Nash equilibrium after another fifty iterations.

\begin{figure}[!t]
\centering
\includegraphics[width=0.5\linewidth]{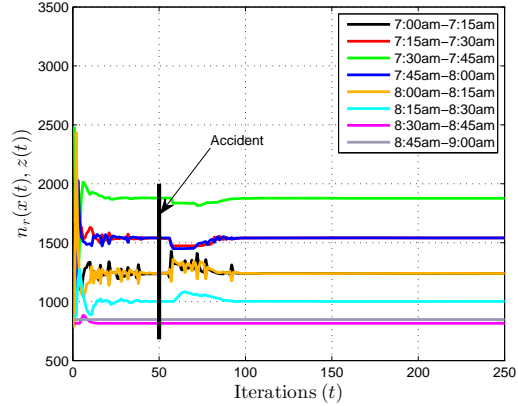}
\caption{\label{figure:Accidnet} $n_{r}(x(t),z(t))$, $r\in\mathcal{R}$, versus the iteration number when an unexpected behavior (e.g., an accident) disrupt the traffic flow on the fiftieth day of learning.  }
\end{figure}

\subsection{Effect of the Fuel-Saving Coefficient}
In this subsection, we aim at illustrating the effect of the fuel-saving coefficient $\beta$ on the behavior of the trucks. We perform all the simulations using the joint strategy fictitious play introduced in Algorithm~\ref{alg:1} with $p=0.4$ and $\lambda_t=3\times 10^{-2}$ for all $t\in\mathbb{N}_{0}$. Figure~\ref{figure3} illustrates the number of trucks for the learned Nash equilibrium at different time intervals for various choices of the coefficient~$\beta$. As we expect, when $\beta=0$, the trucks are reluctant to platoon (but instead stick to the time that favors them the most). However, as we increase the coefficient $\beta$, a higher number of trucks drive at the same time interval. Note that for $\beta= 4\times 10^{-3}$, all hundred trucks use the road during exactly one time interval (i.e, 8:00am to 8:15am).

\begin{figure}[!t]
\centering
\includegraphics[width=0.5\linewidth]{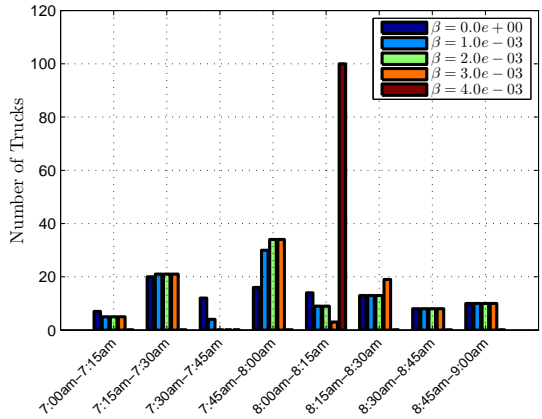}
\caption{\label{figure3} Number of the trucks in each time interval for various choices of the coefficient $\beta$.  }
\end{figure}

\subsection{Drivers Having Different Time Values}
In 2001, the consulting firm Inregia in Sweden, by the request of Swedish Institute for Transport and Communications Analysis, performed a survey to estimate the value of time for the road users in Stockholm~\cite{Inregiareport,engelson2006congestion}. This study showed that various groups of people value their time differently. According to the study, drivers valued time as $0.98$, $3.30$, and $0.19\,\mathrm{SEK/min}$ for work and school commuting trips, business trips, and other trips, respectively~\cite{Inregiareport,engelson2006congestion}. Let us include this effect in the introduced congestion game setup. Assume that in the utility of car $i\in\N$, we set the term 
$$
p_i^{\mathrm{c}}(z,x)=\delta_i^{-1}\hspace{-.04in}\left(a\beta \sum_{\ell=1}^{m_{z_i}(x)}g(\ell)\right)\hspace{-.05in},
$$
where $\delta_i> 0$ is the value of time for the driver of car~$i$. For work and school commuting trips, we scale the value of time to $\delta_i=1.00$. Therefore, we get $\delta_i=3.37$ and $\delta_i=0.19$ for business trips and other trips, respectively. Now, allow us to randomly distribute the cars into three groups of work and school trips, business trips, and other trips with probabilities $0.754$, $0.036$, $0.210$, respectively, as suggested in~\cite{engelson2006congestion}. Figure~\ref{figure4} shows the number of trucks in each time interval as a function of the iteration number in this case. Comparing with Figure~\ref{figure2}, we can clearly see that in this example, the difference in the value of time has not changed the behavior of trucks (certainly in the Nash equilibrium, but the transient response is different). Figure~\ref{figure_cars_value} shows the number of the cars in each time interval for the case where the drivers value their time differently subtracted by number of the cars in each time interval for the case where the drivers value their time equally. Clearly, the cars that value their time the most, or equivalently, the ones that are willing to pay higher congestion taxes (i.e., $\delta_i=1.00,3.37$), can move to the time interval where thirty trucks are traveling. However, the cars that do not value their time much (i.e., $\delta_i=0.19$) switch to a less expensive alternative.

\begin{figure}[!t]
\centering
\includegraphics[width=0.5\linewidth]{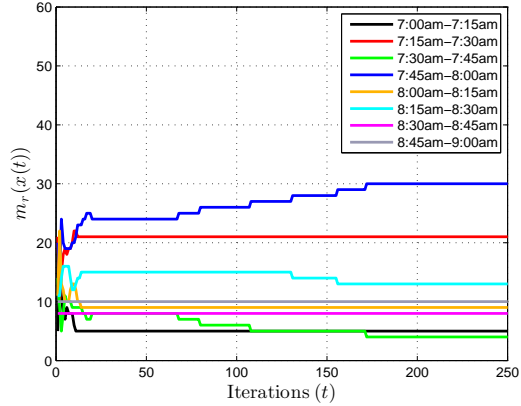}
\caption{\label{figure4} $m_{r}(x(t))$, $r\in\mathcal{R}$, versus the iteration number for the case where the drivers value their time differently.  }
\end{figure}

\begin{figure}[!t]
\centering
\includegraphics[width=0.5\linewidth]{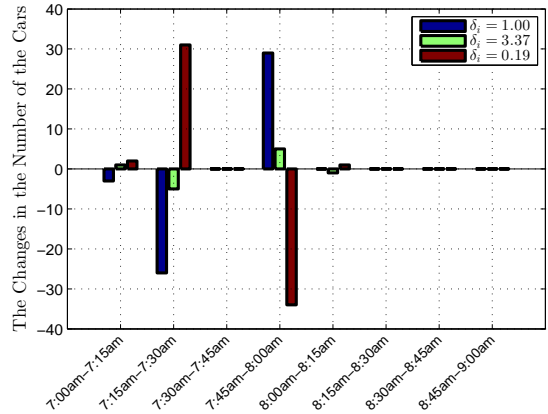}
\caption{\label{figure_cars_value} Number of the cars in each time interval for the case where the drivers value their time differently subtracted by number of the cars in each time interval for the case where their drivers value the time equally.  }
\end{figure}

\subsection{Trucks with and Without Platooning Equipment }
Few trucks are currently fitted with platooning equipments. In this subsection, we try to understand the influence of this matter on the properties of the learned Nash equilibrium. To illustrate the effect of trucks without platooning equipment, let us consider two types of trucks where the first type can indeed participate in platoons and the second type does not have the necessary equipments for doing so. We count the second type of trucks as ordinary cars since they do not benefit from traveling at the same time interval as the other trucks. Hence, $N$ shows the number of ordinary cars together with the trucks without platooning equipment and $M$ denotes the number of trucks that can potentially participate in forming the platoons. We fix $N+M=10000$. Figure~\ref{varying_number_of_trucks} illustrates the number of the trucks that have platooning equipment in each time interval for various ratios of $M/(M+N)$. Evidently, the number of the trucks (with platooning equipment) in most of the time intervals grows linearly with $M/(M+N)$ (as we expect since there are more trucks). However, some of the intervals, such as, 7:30am to 7:45am become less favorable (as they are highly congested) and the trucks in these intervals completely move to their neighboring intervals as $M/(M+N)$ increases.

\begin{figure}[!t]
\centering
\includegraphics[width=0.5\linewidth]{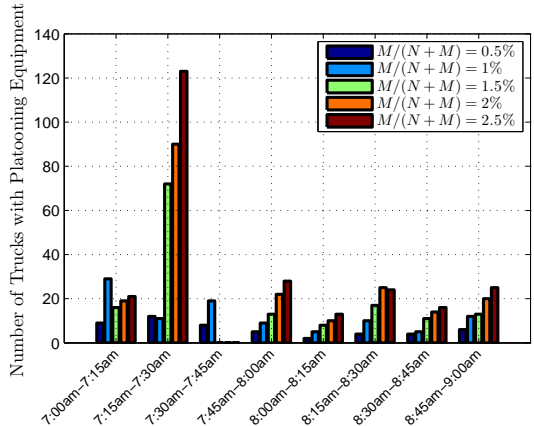}
\caption{\label{varying_number_of_trucks} Number of the vehicles in each time interval for the learned pure strategy Nash equilibrium for various choices of $M/(M+N)$. }
\end{figure}

\subsection{Announcing Congestion Taxes in Advance}
A drawback of the presented formulation is that the congestion taxes are dynamic and must be calculated (and enforced) instantly based on the number of the vehicles in each time interval. Although dynamic congestion taxing has been implemented on several occasions (e.g., San Diego I-15 High-Occupancy Toll Lanes in which the tolls vary dynamically with the level of congestion~\cite{USHighway}), they proved to be controversial (or, cumbersome to understand for the drivers at the least). Therefore, one might consider the case in which the tolls for day $t+D$ are announced at the end of day $t$ for all $t\in\mathbb{N}_0$ (so that the drivers have time to digest this information and act accordingly). To simulate such a scenario, we note that the congestion tax $p_i^{\mathrm{c}}(t)$ that car $i\in\N$ must pay for using the road at time interval $z_i(t)\in\mathcal{R}$ on iteration $t\in\mathbb{N}_0$ is equal
$$
p_i^{\mathrm{c}}(t)=\left\{ 
\begin{array}{ll}
a\beta \sum_{\ell=1}^{m_{z_i(t)}(x(t-D))}g(\ell), & t>D, \\
0, & \mbox{otherwise.}
\end{array}
\right.
$$
Figure~\ref{figuredelay} illustrates the number of the vehicles for each time interval $n_{r}(x(t),z(t))$, $r\in\mathcal{R}$, versus the iteration number when the congestion tax is updated with a delay of $D=30$ days. Evidently, there are more oscillations in comparison to Figure~\ref{figure1}, however, the algorithm converges rapidly to a pure strategy Nash equilibrium.

\begin{figure}[!t]
\centering
\includegraphics[width=0.5\linewidth]{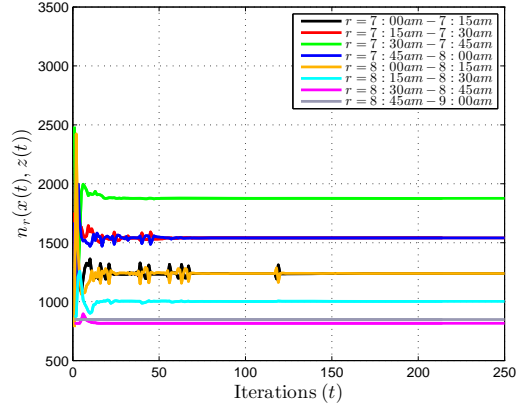}
\caption{\label{figuredelay} $n_{r}(x(t),z(t))$, $r\in\mathcal{R}$, versus the iteration number when the congestion tax is updated with a delay of $D=30$ days. }
\end{figure}

\subsection{Average Strategy Fictitious Play} \label{subsec:numericalexample:asfp}
In this subsection, we use the average strategy fictitious play with $\beta=10^{-3}$, $\lambda=3\times 10^{-2}$, and $p=0.4$. We also implement the platooning subsidy in Theorem~\ref{lem:2} with $v_0=85$. Figure~\ref{figure6} illustrates $n_{r}(x(t),z(t))$, $r\in\mathcal{R}$, versus the iteration number. The proposed algorithm clearly converges to a Nash equilibrium relatively fast.

\begin{figure}[!t]
\centering
\includegraphics[width=0.5\linewidth]{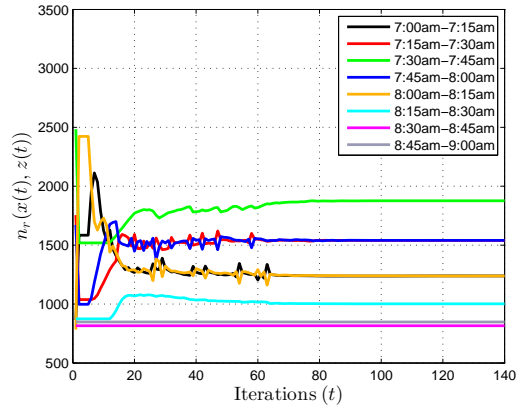}
\caption{\label{figure6} $n_{r}(x(t),z(t))$, $r\in\mathcal{R}$, versus the iteration number for $\beta=10^{-3}$ and $v_0=85$ when using the average strategy fictitious play in Algorithm~\ref{alg:2}. }
\end{figure}

\section{Conclusions and Future Work} \label{sec:conclusion}
We introduced a model for traffic flow on a specific road at various time intervals per day using an atomic congestion game with two types of agents (namely, cars and trucks). Cars only optimize their trade-off between using the road at the time they prefer, the average velocity of the traffic flow, and the congestion tax they are paying. However, trucks benefit from using the road at the same time as the other trucks. We motivated this extra utility using an increased possibility of platooning with the other trucks and as a result, saving fuel. We used congestion data from Stockholm to validate the linear relationship between the average velocity of commuting and the number of the vehicles that are using the road at that time. We devised appropriate tax or subsidy policies to create a potential game. Then, we used the joint strategy fictitious play and the average strategy fictitious play to learn a pure strategy Nash equilibrium of this game. We conducted a comprehensive simulation study to analyze the effect of different factors on the properties of the learned Nash equilibrium. As a future work, we can consider using mechanism design tools to enforce a socially optimal solution, such as, an optimal carbon emission profile, through appropriate congestion tax policy. Finally, in this paper, we did not consider the routing aspects of the problem. It would be of great interest in future research to combine the departure-time selection and the route selection problems in the context of understanding the platooning incentives.

\section*{Acknowledgement} 
The authors would like to thank Wilco Burghout for kindly providing the traffic data from the E4 highway in Stockholm. They would also like to thank Lihua Xie and Nan Xiao for initial discussions on the problem considered in this paper.

\bibliographystyle{ieeetr}
\bibliography{compile_new}

\end{document}